\theoremstyle{definition}
\newtheorem{theorem}{Theorem}
\newtheorem{example}{Example}[section]
\newtheorem{axiom}{Axiom}
\DeclareMathOperator*{\argmin}{argmin}
\DeclareMathOperator*{\argmax}{argmax}
\definecolor{nathaniel_color}{rgb}{0.7,1,0.7}
\definecolor{daniel_color}{rgb}{0.7,0.7,1}
\definecolor{nihat_color}{rgb}{1,0.7,0.4}
\newcommand{\antichainfullf}{\mathcal}
\newcommand{\simpcompfullf}{\mathcal}
\newcommand{\simpcomp}{\mathfrak{S}}
\newcommand{\coalition}{\simpcomp}
\newcommand{\someothercoalition}{\mathfrak{U}}
\newcommand{\yetanothercoalition}{\mathfrak{T}}
\newcommand{\allsubsets}{\mathfrak{N}}
\newcommand{\allsimpcomp}{\mathcal{D}}
\newcommand{\allcoalitions}{\allsimpcomp}
\newcommand{\allplayers}{\allsubsets}
\newcommand{\cooperativegame}{\Upsilon}
\newcommand{\allrankings}{\mathcal{R}}
\newcommand{\kl}{{D_\text{KL}}} 
\title{Information Decomposition Based on Cooperative Game Theory}  
\author{Nihat Ay${}^{1,2,3}$, Daniel Polani$^{4}$, Nathaniel Virgo$^{1,5}$}
\begin{document}

\maketitle

\begin{center}
${}^{1}$Max Planck Institute for Mathematics in the Sciences, Leipzig, Germany \\
${}^{2}$University of Leipzig, Leipzig, Germany \\
${}^{3}$Santa Fe Institute, Santa Fe, NM, USA \\
${}^{4}$
University of Hertfordshire, UK
  \\
${}^{5}$Earth-Life Science Institute (ELSI), Tokyo, Japan \\
\end{center}

\begin{abstract} 
We offer a new approach to the \emph{information decomposition} problem in information theory: given a `target' random variable co-distributed with multiple `source' variables, how can we decompose the mutual information into a sum of non-negative terms that quantify the contributions of each random variable, not only individually but also in combination? We derive our composition from cooperative game theory. It can be seen as assigning a ``fair share'' of the mutual information to each combination of the source variables. Our decomposition is based on a different lattice from the usual `partial information decomposition' (PID) approach, and as a consequence our decomposition has a smaller number of terms: it has analogs of the synergy and unique information terms, but lacks terms corresponding to redundancy. Because of this, it is able to obey equivalents of the axioms known as `local positivity' and `identity', which cannot be simultaneously satisfied by a PID measure.
\\
 
{\bf \em Keywords:\/} partial information decomposition, information geometry, cooperative game theory

\end{abstract}


\section{Introduction}

We are interested in understanding the flow of information in dynamical systems. Several tools for this have been developed over recent decades. These include  the
transfer entropy
\citep{schreiber00:_measur_infor_trans,lizier14:_jidt}, which assumes a
dynamical systems framework and that the system consists of identifiable
components which can be tracked through time, and the causal
information flow \citep{ay2008information}, which is  defined in terms
of general causal Bayesian networks and the interventional calculus~\citep{pearl2009causality}.
These techniques provide ways to
quantify the amount of influence that one part of the network
has on another. 

However, there is a growing awareness
that it would be useful to quantify causal influences in a more
fine-grained way than offered by current techniques.
Even in one of the simplest cases, in which multiple random variables exhibit a
causal influence on a single `target' variable, it would be desirable to have more detailed understanding in information theoretic terms,
not only of how each variable affects the target \emph{individually}, but of how \emph{multiple} causes interact in bringing
about their effects.

Of the existing approaches to this question, perhaps the best known is the
 Partial Information Decomposition (PID) framework, due to
\citet{williams2010nonnegative}. 
The PID framework proposes that the mutual information between several `source' variables 
and a single `target' can be decomposed into a sum of several terms. In the case of two sources,
these terms are $(i)$ the information that the two sources provide redundantly about the target
(known as redundant information, shared information or common information); $(ii)$ the information provided
uniquely by each of the two sources, and $(iii)$ the synergistic or complementary information, which can only
be obtained by knowing both of the sources simultaneously.

However,
the axioms proposed by Williams and Beer do not completely determine
these quantities. As a result, many PID measures have been proposed in
the literature, each satisfying different additional properties beyond
the ones given by Williams and Beer. Several approaches have been proposed.
{
Among these are several that are based on information geometry 
\citep{harder13:_bivar, bertschinger2014quantifying, perrone2016hierarchical, olbrich2015information,
griffith2014quantifying, James:2019jn}, which we build upon here.
}

Generalizing towards the case of three or more input variables has
turned out to be more problematic under the PID framework. One of the
most intuitive additional axioms proposed is known as the identity axiom, proposed by
\cite{harder13:_bivar}, but it was shown by \cite{rauh2014reconsidering} that no
measure can exist that obeys both Williams and Beer's axioms
(including ``local positivity'') and the identity axiom. 
{
Because of this, there are a number of
proposed PID measures that relax either the identity axiom or the local positivity axiom
of Williams and Beer, or both. Such approaches include \citep{ince2017partial, finn2018pointwise, kolchinsky2019novel}.
Another promising class of approaches involve changing to a slightly different perspective, for example, by considering the
full joint distribution between multiple random variables, rather than singling out a single variable as the target 
\citep{rosas2016understanding, james2017multivariate}. In the present paper, we present a different decomposition of the
mutual information between a set of sources and a target. Our decomposition obeys analogs of both the local positivity and
identity axioms, but it has a smaller number of terms than the partial information decomposition.
}

A related, but different, approach to multivariate information can be
found in a family of measures that attempt quantify the complexity of
a set of random variables, often also divided into input and output
variables. These include Amari's \emph{hierarchical decomposition}
\citep{amari2001information}, as well as Ay's measure of complexity \citep{ay2015information}, and several measures that have arisen in the context of Integrated Information Theory (IIT), such as \citep{oizumi2016unified}. This family of measures is reviewed in \citep{amari2016geometry}, which describes their relationships in terms of information geometry.

In this paper we are interested in a similar setup to the PID framework, in which several   random variables, $X_1, X_2, \dots X_n$, which we term 
\emph{input random variables}, exhibit causal influences on a \emph{target} random variable $Y$.   
This results in a joint distribution between  $X_1, X_2, \dots X_n$ and $Y$.
The mutual information
and conditional mutual information can be used to quantify the
influence of each cause individually, as well as the conditional
influence that one input variable has, once the value of another has been
taken into account. However, in general it would be desirable to decompose the
relationships between the causal influences more finely than the
traditional conditional mutual information makes possible.

The promise of the PID approach was that it would offer a ready-made
or at least preferred solution to this question. A PID measure would
have allowed us to quantify not only the overall influence of $X_1$
upon $Y$, but also the extent to which it has a \emph{unique} causal
influence, which could be interpreted as distinct from that of the
other causes; additionally, it would also allow synergistic or
redundant causal influences to be quantified. This could be done
simply by applying the PID to the joint distribution
$X_1 X_2 \dots X_n Y$. This is, broadly speaking, the approach taken
by \cite{lizier2014framework}. However, 
the lack of a non-negative
PID measure for three or more input variables makes it difficult to
interpret the decomposition in the case of three or more causes.

Here we propose a different approach, with slightly more modest goals
than PID, in that we do not attempt to quantify redundancy. Instead, we
provide a decomposition of the mutual information $I(X_1, X_2, \dots,
X_n ;Y)$ into a sum of terms corresponding to every possible subset of
the causes (e.g.\ $\{X_1\}$, $\{X_1, X_2\}$, etc.). These terms resemble the unique information and synergy terms in the PID framework.
We show that the
terms represent, in a well-defined sense, a uniquely ``fair
apportionment'' of the total mutual information into the contribution
provided by each subset of sources. A set such as $\{X_1, X_2\}$ will
make a contribution of zero if it provides no new information beyond
that which is already provided by its subsets. How to achieve this  will be made precise below. In this sense our measure plays a similar role to that of synergy in the PID lattice. However, our measure does not attempt to quantify redundancy, and as such it is not a solution to the PID problem. Because of this, we are able to give a non-negative decomposition for an arbitrary number of inputs, which obeys an analog of the identity axiom for PID measures.

To state our problem more precisely: we consider random variables $X_1,\dots, X_n$, the {\em input
  variables\/}, and $Y$, the {\em output variable\/}. We restrict ourselves to the case where these variables have finite
state sets $\mathsf{X}_i$, $i = 1,\dots,n$, and $\mathsf{Y}$, but we expect our measure to generalise well to cases such
as Gaussian models in which the state spaces are continuous.

We write $V$ for the set of all input variables. Our
goal is a decomposition of the mutual information, $I(X_1, \dots, X_n;
Y)$ into a sum of terms corresponding to every {}
subset of $\{X_1, \dots, X_n\}$. We refer to a set of input variables as
a {\em predictor\/}. In other words, we will write
\[
	I(X_1,\dots, X_n;Y) = \sum_{A\in 2^V} I_A(X_1,\dots, X_n;Y)\;,
\]
where we write $2^V$ for the power set of $V$.
 For a given predictor $A$, the term
$I_A(X_1,\dots, X_n;Y)$ shall indicate the proportion of the total mutual
information that is contributed by $A$, beyond what is already
provided by its subsets. How to do this will be made precise below. We call
$I_A(X_1,\dots, X_n;Y)$ the \emph{information contribution} of $A$ to
$Y$.

To construct our measure of information contribution we proceed in two
steps. We begin by defining the mutual information provided by certain
\emph{sets} of predictors, i.e.\ sets of sets of input variables. We
do this via a sublattice of the lattice of probability distributions
that \cite{James:2019jn} termed the ``constraint lattice.'' The same
lattice has appeared in the literature previously, within the topic of
reconstructability analysis \citep{Zwick:2004co}. Having established
the information contribution of each set of predictors, we then assign
a contribution to each individual predictor by a method that involves 
summing over the maximal chains of the constraint lattice. 

We then show that this procedure of summing over maximal chains
can be derived using cooperative game theory. We can conceptualise our measure in terms of a
cooperative game, in which each set of predictors is thought of as a coalition of players.
Each coalition is assigned a score corresponding to the information they
jointly provide about the target. Our measure can then be derived via a known generalisation
of the Shapley value due to {\cite{Faigle1992}}, which assigns a score to each individual player (i.e. predictor)
based on its average performance among all the coalitions in which it takes part, while respecting additional precedence constraints.

Since our measure is based on the constraint lattice, we review this concept in depth in \cref{constraint_lattice.sec}. We approach the constraint lattice from the perspective of information geometry and state its relationship to known results in that field. In \cref{input_lattice.sec} we consider a sublattice of the constraint lattice which we term the \emph{input lattice}, which allows us to define a quantity corresponding to the information that a set of predictors provides about the target. From this we derive our measure by summing over the maximal chains of the input lattice. After proving some properties of our information contribution measure and giving some examples (\cref{properties,examples.sec}), we then make the connection to cooperative game theory in \cref{cooperative_game_theory}, proving that our measure is equivalent to the generalised Shapley value of {\cite{Faigle1992}}.

\section{Background: the constraint lattice}
\label{constraint_lattice.sec}

We begin by defining the so-called ``constraint lattice'' of \cite{James:2019jn}, which has also been defined previously in the context of reconstructability analysis \citep{Zwick:2004co}. This  section serves to summarise previous work and to establish notation for the following sections.

\subsection{Lattices of simplicial complexes}

Suppose we have a set $W$ of co-distributed random variables, $W = \{Z_1, Z_2, \dots, Z_m\}$. Subsets of $W$ may also be considered as random variables. For example, $\{Z_1, Z_2\}$, which we also write $Z_1Z_2$, can be thought of as a random variable whose sample space is the Cartesian product of the sample spaces of $Z_1$ and $Z_2$. The constraint lattice is defined in terms of members of the power set $2^W$.%

For reasons that will be explained below, 
 we want to put some restrictions on which members of $2^W$ are permitted as elements of the lattice. This may be done
in terms of of two different concepts, \emph{antichains} or \emph{simplicial complexes}. It is standard in the literature to define the constraint lattice in terms of antichains. However, in making the connection to cooperative game theory it will be more convenient to talk in terms of simplicial complexes instead. For this reason, we define both terms here, but define the constraint lattice in terms of simplicial complexes.

A set of 
sets $\antichainfullf{S}$
 is called an antichain if for every $A \in \antichainfullf{S}$, no
 subset $B\subset A$ 
 is a member of~$\antichainfullf{S}$, i.e. 
\begin{equation} \label{antichain}
     A \in \antichainfullf{S}, \;\; B \subset A \quad \Rightarrow \quad B \notin \antichainfullf{S}\,.
\end{equation}
Similarly, a 
set of sets $\simpcompfullf{S}$ is called a simplicial complex if for every $A \in \simpcompfullf{S},$ every subset $B\subset A$ is a member of $\simpcompfullf{S},$ 
\begin{equation} \label{simplcomp}
     A \in \simpcompfullf{S}, \;\; B \subset A \quad \Rightarrow \quad B \in \simpcompfullf{S}\,.
\end{equation}
Both of these concepts can be defined more generally in terms of arbitrary partial orders, but here we need only their definitions in terms of sets. In some contexts the empty set is considered a simplicial complex, but for most of this paper we consider only non-empty simplicial complexes.

We note that one can convert an antichain $\antichainfullf{S}$ into a simplicial complex by adding every subset of every member of $\antichainfullf{S}$, and one can restore the antichain by removing every element that is a subset of some other element. This gives a one-to-one correspondence between antichains and simplicial complexes, which allows us to use the two concepts somewhat interchangeably.

We define the
constraint lattice in terms of simplicial complexes whose members
\emph{cover}~$W$, meaning those simplicial complexes
$\simpcompfullf{S}$ for which each element of $W$ appears at least once in
one
of the members of $\simpcompfullf{S}$. That is, $\simpcompfullf{S}$
covers $W$ if $\bigcup_{A\in\simpcompfullf{S}}
A=W$. 
 Such a simplicial complex is termed a \emph{simplicial complex cover}
 of $W$.
 
The following partial order may be defined on simplicial complexes:
\begin{equation}
\label{simp_comp_partial_order} 
\simpcompfullf{S}\le \simpcompfullf{S}' \quad\text{if and only if}\quad \forall A\in \simpcompfullf{S},\, \exists B\in \simpcompfullf{S}'{:}\, A\subseteq B\;.
\end{equation}
The constraint lattice is composed of the simplicial complexes that cover $W,$ with this partial order. 

\begin{figure}
 \centering
 \includegraphics[width=0.5\textwidth]{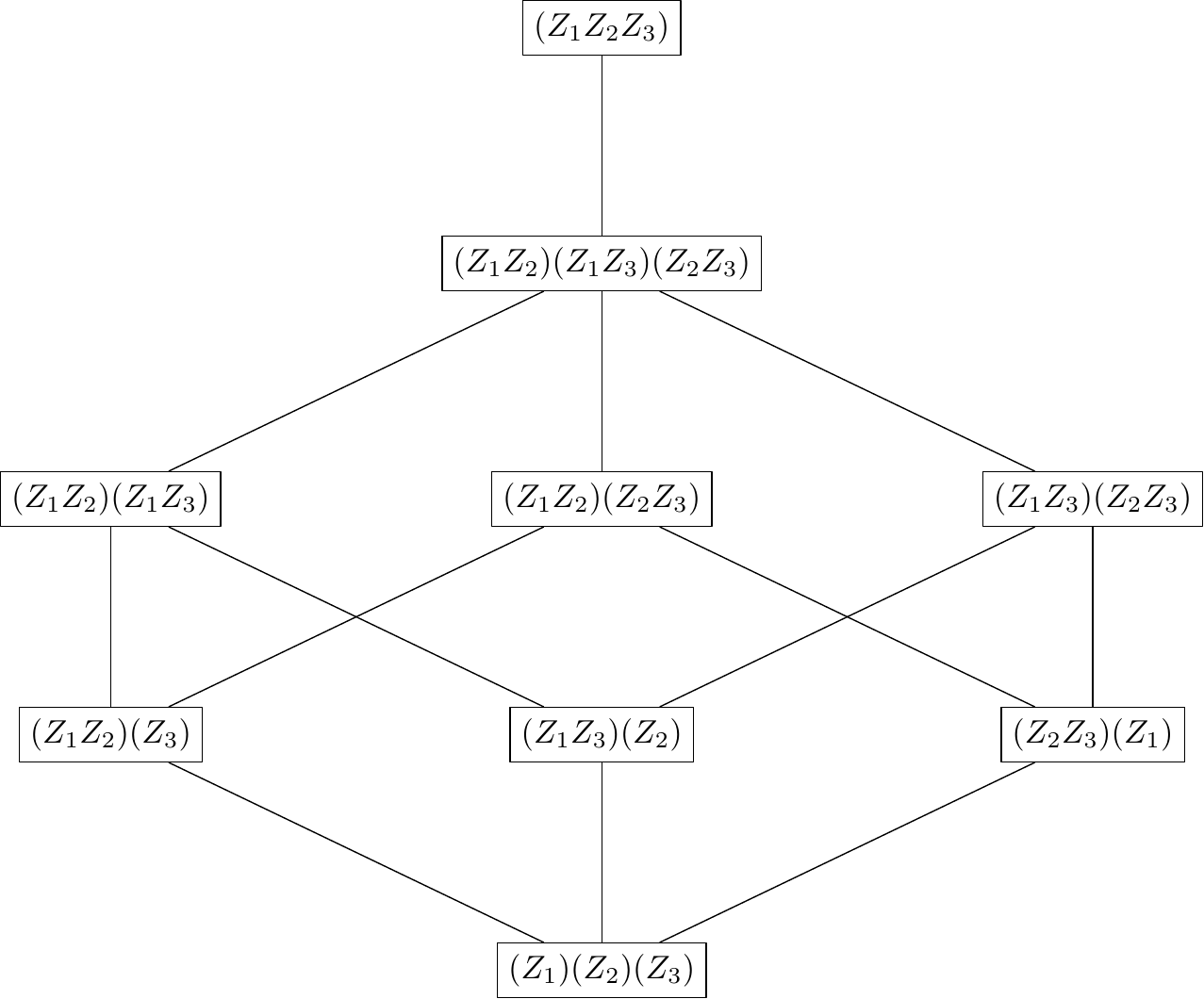}
 \caption{
 	The Hasse diagram for the constraint lattice, as defined by \citep{Zwick:2004co, James:2019jn}, for three random variables, $W = \{Z_1, Z_2, Z_3\}$.
 }
 \label{lattice_z.fig}
\end{figure}

The resulting lattice is illustrated in \cref{lattice_z.fig}. In the figures and elsewhere, we use the following shortcut notation for simplicial complexes: we take the corresponding antichain, write its elements as lists surrounded by parentheses, and concatenate them.
For example, the notation $(Z_1Z_2)(Z_3)$ refers to the simplicial complex $\{\{Z_1, Z_2\}, \{Z_1\}, \{Z_2\}, \{Z_3\}, \emptyset\}$.

It is helpful to introduce some definitions from lattice theory. We write
$\simpcompfullf{S} < \simpcompfullf{T}$ if $\simpcompfullf{S} \le \simpcompfullf{T}$
and $\simpcompfullf{S} \ne \simpcompfullf{T}$. We say that a lattice element
$\simpcompfullf{T}$ \emph{covers} an element $\simpcompfullf{S}$, written
$\simpcompfullf{S} \prec \simpcompfullf{T}$, if $\simpcompfullf{S} <
\simpcompfullf{T}$ and there exists no element $\simpcompfullf{U}$ such that
$\simpcompfullf{S} < \simpcompfullf{U} < \simpcompfullf{T}$.  A sequence of elements $\simpcompfullf{S}_1, \dots, \simpcompfullf{S}_k$ is called a \emph{chain} if $\simpcompfullf{S}_1 < \simpcompfullf{S}_2 < \dots < \simpcompfullf{S}_k$. If we have in addition that  $\simpcompfullf{S}_1 \prec \simpcompfullf{S}_2 \prec \dots \prec \simpcompfullf{S}_k$, then it is called a \emph{maximal chain}.

In \cref{lattice_z.fig}, the relationship $\simpcompfullf{S} \prec \simpcompfullf{T}$ is indicated by drawing $\simpcompfullf{T}$ above $\simpcompfullf{S}$ and connecting the elements with an edge. The resulting graph is called the Hasse diagram of the lattice. The maximal chains are the directed paths from the bottom node in \cref{lattice_z.fig} to the top node.

\subsection{Constraints and split distributions}
\label{constraints}

Let $p = p(Z_1, \dots, Z_m)$ be the joint probability distribution of
the members of $W$. We call this the \emph{true distribution}.
Following \citep{James:2019jn} and \citep{Zwick:2004co}, we now wish
to associate with each simplicial complex cover $\simpcompfullf{S}$ of $W$
 a joint distribution $p_\mathcal{S} =
p_\mathcal{S}(Z_1, \dots, Z_m)$. In the spirit of
\citep{ay2015information,oizumi2016unified,amari2016geometry} we term these \emph{split distributions}. 
{Each split distribution captures only some of the correlations present in the true distribution, and we can think of the remaining correlations as being split apart, or forced to be as small as possible.}

Specifically, each split distribution $p_\simpcompfullf{S}$ is constructed so that it captures the correlations associated with the members of $\simpcompfullf{S}$, in the sense that $p_\simpcompfullf{S}(A) = p(A)$, for every $A\in \simpcompfullf{S}$. This defines a family of distributions, and from this family we choose the one with the maximum entropy. Intuitively, the maximum entropy distribution is the least correlated one in the family, so it excludes any additional correlations aside from those specified by $\simpcompfullf{S}$.

In the remainder of this section, we define the split distributions more rigorously, alongside some related objects, and we point out an important property, which follows from the so-called {Pythagorean theorem} of information geometry. This section is largely a review of previous work, and makes a connection between the constraint lattice of \citep{James:2019jn, Zwick:2004co} and the language of information geometry \citep[chapter 2]{ay2017book}.

Let $\Delta$ be the set of all joint probability distributions of the
random variables in $W$. For a simplicial complex cover
$\simpcompfullf{S}$, let
\[
M_\simpcompfullf{S} = \{ q \in \Delta : \forall A\in \simpcompfullf{S}, q(A)=p(A) \}\;.
\]
That is, $M_\simpcompfullf{S}$ is the set of all probability
distributions for which the members of $\simpcompfullf{S}$ have the
same marginal distributions as in the true distribution $p$. Note that if the constraint $q(A)=p(A)$ holds for some $A\subseteq W$,
then it will also automatically hold for $B\subseteq A$. This is the
reason for considering simplicial complexes.
$M_\simpcompfullf{S}$ is a mixture family, and we have that $\simpcompfullf{S}\le \simpcompfullf{T} \implies M_\simpcompfullf{S} \supseteq M_\simpcompfullf{T}$.

We can now define the split distribution $p_\simpcompfullf{S}$ as
\begin{equation} \label{maxent}
p_\simpcompfullf{S} = \argmax_{q\in M_\simpcompfullf{S}} H(q).
\end{equation}
Equivalently, we can instead define the split distributions in terms of the Kullback-Leibler divergence, as we will see below. This has the advantage that it is likely to generalise to cases such as Gaussian models in which the state space is not discrete.

There is another interpretation of the split distributions, which is interesting to note. In addition to the mixture family $M_\simpcompfullf{S}$, we can also define an exponential family corresponding to a given node in the constraint lattice. This can be seen as a family of \emph{split models}, i.e.\ probability distributions in which some kinds of correlation are forced to be absent. The split distribution $p_\simpcompfullf{S}$ can be seen as the closest member of this exponential family to the true distribution.

To see this, we define the exponential family
\begin{equation}
\label{exponential_family}
  E_\mathcal{S} =  \left\{ q \in \Delta : q(z_1, \dots, z_m) = \prod_{A\in\mathcal{S}} \mu_A(z_1, \dots z_m), \text{for some set of functions $\mu_A$} \right\}\;,
\end{equation}
where the functions $\mu_A$ have the additional requirements that $\mu_A(z_1, \dots, z_m)$ depends only on $z_i$ for $Z_i\in A$, and  $\mu_A(z_1, \dots, z_m)>0$.
 $E_\mathcal{S}$ is an exponential family, and we have $\simpcompfullf{S}<\simpcompfullf{T} \implies E_\simpcompfullf{S} \subseteq E_\simpcompfullf{T}$.

Finally, we let $\widebar{E}_\mathcal{S}$ be the topological {closure} of the
set ${E}_\mathcal{S}$, meaning that $\widebar{E}_\mathcal{S}$
contains every member of ${E}_\mathcal{S}$, and in addition also
contains all the limit points of sequences in ${E}_\mathcal{S}$. The difference is that ${E}_\mathcal{S}$ does not
contain distributions with zero-probability outcomes, whereas the
closure $\widebar{E}_\mathcal{S}$ does.

Note that we cannot obtain $\widebar{E}_\simpcompfullf{S}$ by simply relaxing the condition that $\mu_A(z_1, \dots, z_m)>0$. This is because although every member of $E_\simpcompfullf{S}$ must factorise according to \cref{exponential_family}, the limit points on the boundary of the simplex can fail to factorise in the same way. An example of this is given by \citep[Example 3.10]{lauritzen1996graphical}. These limit points must be included in order to make sure the split distribution is always defined.

It is a known result in information geometry \citep[Theorem 2.8]{ay2017book}
that for any $\simpcompfullf{S}$, the sets $M_\simpcompfullf{S}$ and
$\widebar{E}_\simpcompfullf{S}$ intersect at a single point. {In fact this
point is the split distribution $p_\simpcompfullf{S}$.}
With the Kullback-Leibler divergence
\[
D_\text{KL}(q\|p) = \sum_{z_1,\dots,z_m} q(z_1,\dots,z_m)\log \frac{ q(z_1,\dots,z_m)}{ p(z_1,\dots,z_m)}\; ,
\]
we can equivalently characterise $p_\simpcompfullf{S}$ by  
\begin{equation}
\label{i-projection}
 p_\simpcompfullf{S} = \argmin_{q \in M_\mathcal{S}} D_\text{KL} (q\| u)\;,
\end{equation}
where $u$ denotes the uniform distribution. This directly follows from (\ref{maxent}). 
A further equivalent characterisation of $p_\simpcompfullf{S}$ is given by 
\begin{equation}
 \label{ri-projection}
 p_\simpcompfullf{S} = \argmin_{q \in E_\simpcompfullf{S}} D_\text{KL} (p\| q)\;.
\end{equation}
In the terminology of information geometry, \cref{i-projection} is an I-projection (information projection) and \cref{ri-projection} is an rI-projection (reverse I-projection). 
The classical theory of these information projections has been 
greatly extended by \cite{csiszarmatus2003, csiszarmatus2004}. 

We also have the so-called \emph{Pythagorean theorem} of information geometry \citep{amari2007methods}, which in our notation says that for simplicial complex covers  $\simpcompfullf{S}<\simpcompfullf{T}<\simpcompfullf{U}$,
\begin{equation}
D_\text{KL} (p_\simpcompfullf{U}\|p_\simpcompfullf{S}) = D_\text{KL} (p_\simpcompfullf{U}\|p_\simpcompfullf{T}) + D_\text{KL} (p_\simpcompfullf{T}\|p_\simpcompfullf{S})\;. \label{pythagorean}
\end{equation}
\Cref{pythagorean} can be extended to any \emph{chain} of elements in the constraint lattice $\simpcompfullf{S}_1 < \simpcompfullf{S}_2 < \dots < \simpcompfullf{S}_k$, to give\begin{equation}
 D_\text{KL}(p_{\simpcompfullf{S}_k} \| p_{\simpcompfullf{S}_1}) = \sum_{i=2}^k  D_\text{KL}(\simpcompfullf{S}_i \| \simpcompfullf{S}_{i-1})\;.
 \label{full_pythagorean_eqn}
\end{equation}
 This will be crucial in defining our information contribution measure below.

Consider the top node in the constraint lattice, given by $(Z_1,\dots,Z_m)$, which we denote $\top$. We have $p_\top = p$. That is, the split distribution corresponding to $\top$ is equal to the true distribution. 

Since we are considering only simplicial complex covers of $W$, the bottom node of the lattice is given by $(Z_1)\dots(Z_m)$, which we denote $\bot$. We have $p_\bot(z_1, \dots, z_m) = p(z_1) \dots p(z_m)$. That is, its split distribution is given by the product of the marginal distributions for all the members of $W$.

Together with \cref{i-projection}, this allows us to interpret $p_\simpcompfullf{S}$ as the distribution that is \emph{as decorrelated as possible} (i.e.\ closest to the product distribution, in the Kullback-Leibler sense), subject to the constraint that the marginals of the members of $\mathcal{S}$ match those of the true distribution. 
Alternatively, via \cref{ri-projection}, we can see it as the distribution that is as close to the true distribution as possible, subject to the constraint that it lies in the closure of the exponential family $\widebar{E}_\simpcompfullf{S}$.

For a general antichain cover $\simpcompfullf{S}$, the split distribution $p_\simpcompfullf{S}$ may not have an analytical solution, and instead must be found numerically. 
One family of techniques for this is iterative scaling \citep[][chapter 5]{csiszar2004information}, which was used to calculate the examples below. Alternatively, one may solve \cref{i-projection} as a numerical optimisation problem, starting from an element such as $\bot$ with a known split distribution. This yields a convex optimisation problem with linear constraints, but it is not always well conditioned.

Finally, given an antichain cover $\simpcompfullf{S}$, we define $I_\simpcompfullf{S} \coloneqq D_\text{KL}(p_\top \| p_\simpcompfullf{S})$. This can be thought of as the amount of information that is present in the true distribution $p_\top$ but is not present in $p_\simpcompfullf{S}$. Note that due to the Pythagorean relation (\cref{pythagorean}) we have $D_\text{KL}(p_\mathcal{T}\|p_\simpcompfullf{S}) = I_\simpcompfullf{S}-I_\simpcompfullf{T}$, for any antichain covers $\simpcompfullf{S}\le \simpcompfullf{T}$. The quantity $I_\mathcal{S}$ turns out to be a useful generalisation of the mutual information, as shown in the following examples.

\begin{example} Independence.
\label{independence_example}
 Suppose $W=\{Z_1,Z_2\}$, and let $\simpcompfullf{S} = (Z_1)(Z_2)$. Then $\widebar{E}_\simpcompfullf{S}$ is the set of distributions $q$ that can be expressed as a product $q(z_1,z_2) = \mu_{1}(z_1)\mu_2(z_2)$, which we may also write $q(z_1,z_2) = q(z_1)q(z_2)$. So $\widebar{E}_\simpcompfullf{S}$ is the set of distributions for which $Z_1$ and $Z_2$ are independent. We have that $p_\mathcal{S} = p(z_1)p(z_2)$, and consequently, it is straightforward to show that in this example, $D_\text{KL}(p_\top \| p_\simpcompfullf{S}) = I(X_1 ; X_2)$.
\end{example}

\begin{example} 
Conditional independence.
\label{ci_example}
Suppose $W=\{Z_1,Z_2,Z_3\}$, and let $\simpcompfullf{S} =
(Z_1Z_3)(Z_2Z_3)$. Then $\widebar{E}_\simpcompfullf{S}$ is the set of
distributions $q$ that can be expressed as a product $q(z_1,z_2,z_3) =
\mu_{1}(z_1,z_3)\mu_2(z_2,z_3)$. These are the distributions
for which $q(z_1,z_2,z_3) = q(z_3)q(z_1|z_3)q(z_2|z_3)$, i.e.\ for which $Z_1\perp\!\!\!\perp_q Z_2 \mid Z_3$.
So in this case $E_\mathcal{S}$ can be seen as a conditional
independence constraint. It is straightforward to show that that $p_\simpcompfullf{S}(z_1,z_2,z_3) = p(z_3)p(z_1|z_3)p(z_2|z_3)$, and consequently $D_\text{KL}(p_\top \| p_\simpcompfullf{S}) = I(X_1 ; X_2 | X_3)$.
\end{example}

\begin{example}
Amari's triplewise information.
\label{triplewise_example}
Suppose $W=\{Z_1,Z_2,Z_3\}$, and let $\simpcompfullf{S} = (Z_1Z_2)(Z_1Z_3)(Z_2Z_3)$. Then $E_\simpcompfullf{S}$ is the set of distributions $q$ that can be expressed as a product $q(z_1,z_2,z_3) = \mu_1(z_1,z_2)\mu_2(z_1,z_3)\mu_3(z_2,z_3)$. 
Unlike the previous two examples, there is no analytic expression for
$\mu_1,$ $\mu_2$ and $\mu_3$ in terms of the probabilities
$q(z_1,z_2,z_3)$. However, \cite{amari2001information} argued that
$\widebar{E}_\simpcompfullf{S}$ can be interpreted as the set of distributions
in which there are no three-way, or ``triplewise'' interactions
between the variables $Z_1,$ $Z_2$ and $Z_3$, beyond those that are
implied by their pairwise interactions. The split distribution
$p_\simpcompfullf{S}$ can be calculated numerically 
as described above, in order to obtain the quantity
$D_\text{KL}(p_\top \| p_\simpcompfullf{S})$, which quantifies the
amount of information present in the triplewise interactions.  \cite{amari2001information} gives a straightforward generalisation, allowing $n$-way interactions to be quantified, among $n$ or more random variables.
 As an example of triplewise information,
consider the case where $Z_1$ and $Z_2$ are uniformly distributed
binary variables, and $Z_3 = Z_1 \mathop{\textsc{xor}} Z_2$. In this
case, in the split distribution $p_{(Z_1Z_2)(Z_1Z_3)(Z_2Z_3)}$ all
three variables are independent. The split distribution has 8 equally likely outcomes while the true 
distribution has 4 equally likely outcomes, leading to a triplewise information of 1
bit.
\end{example}

\section{The input lattice}
\label{input_lattice.sec}

The constraint lattice is defined in terms of an arbitrary set of random variables $W=\{Z_1, \dots, Z_m\}$. We are interested specifically in the case where $W$ is composed of a set of input variables $X_1,\dots,X_n$ and a target variable $Y$. We write $V$ for the set of input variables, so $W=V\cup \{Y\}$.

We wish to decompose the mutual information $I(X_1,\dots, X_n;Y)$ into a sum of terms $I_A(X_1,\dots, X_n;Y)$, one for each subset $A$ of  the input variables.
To do this, we start by noting that 
$$
I(X_1,\dots, X_n;Y) = D_\text{KL}( p_{(X_1,\dots, X_n, Y)}\| p_{(X_1,\dots, X_n)(Y)} )\;.
$$
Because of this, we can use the constraint lattice to derive decompositions of the mutual information.

Consider the set of lattice elements $\simpcompfullf{S}$ such that
$(X_1,\dots, X_n)(Y) \le \simpcompfullf{S}$. This set forms a
sublattice of the constraint lattice, i.e.\ a lattice under the same
partial order. We call this sublattice the input lattice. The input
lattice is highlighted in red in \cref{lattice.fig}, left.

\begin{figure}
 \centering
 \includegraphics[width=0.5\textwidth]{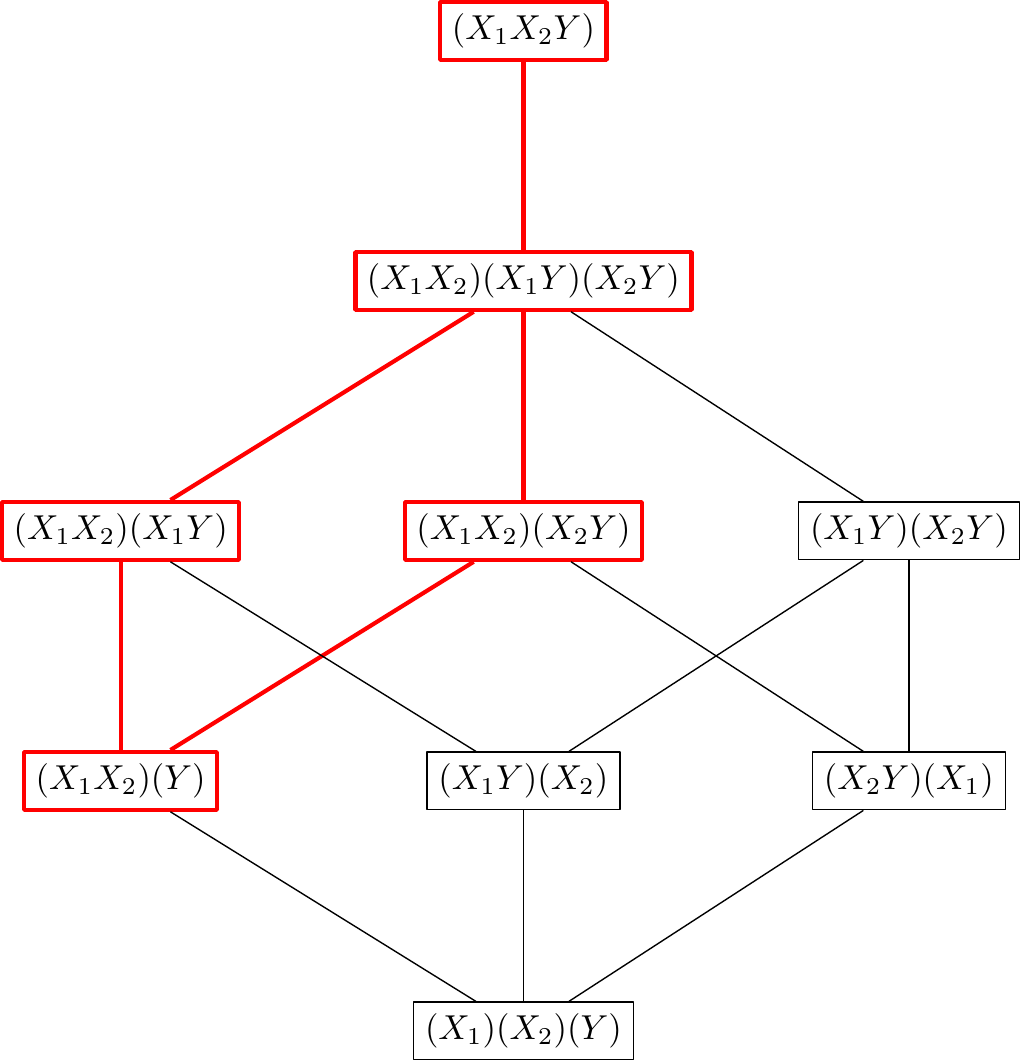}
 \includegraphics[width=0.18\textwidth]{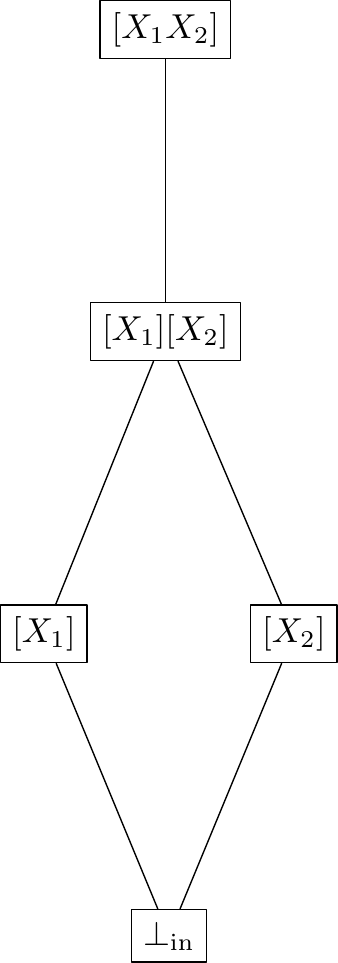}
 \caption{
 	(Left) the Hasse diagram for the constraint lattice for $W=\{X_1,X_2,Y\}$. Highlighted in red bold is the sublattice that we call the input lattice, which provides decompositions of $I(X_1,X_2,\dots, X_n ; Y)$.
	(Right) the input lattice alone, with the nodes labelled using simplicial complexes over the input random variables, rather than all random variables. The square brackets indicate simplicial complexes over the inputs, rather than over all random variables. {The two lattices are related by the mapping $\sigma$, defined in the text.}
 }
 \label{lattice.fig}
\end{figure}

Each element of the input lattice may be associated with a simplicial
complex over the input variables only. That is, a non-empty set
$\simpcomp$ of subsets of $V$, with the condition that every subset of
a member must also be a member. (Unlike the elements of the constraint
lattice, $\simpcomp$ need not cover $V$.) We use a Fraktur font for
simplicial complexes over the input variables only, to distinguish
them from simplicial complex covers of $W$. Their relationship to the
input sublattice can be seen by noting that
{if 
$(X_1\dots X_n)(Y) \le \simpcompfullf{S}$ then $\simpcompfullf{S}$ must include the element
$\{X_1,\dots X_n\}$} 
   and its
subsets. In addition $\simpcompfullf{S}$ contains elements of the form $A\cup\{Y\}$,
where $A$ is a subset of the input variables. These sets of input
variables must by themselves form a simplicial complex, in order for
$\simpcompfullf{S}$ to be a simplicial complex. This is the simplicial complex 
  $\simpcomp$ over the input variables corresponding to $\simpcompfullf{S}$.

Formally, given a simplicial complex $\simpcomp$ over the input variables, {the corresponding member of the constraint lattice is given by}
\[
 \sigma(\simpcomp) = (X_1 \dots X_n) \cup \big\{ A \cup \{Y\}: A \in \simpcomp \big\}\;.
\]
For any $\simpcomp$, we have that $\sigma(\simpcomp)$ is a simplicial complex cover of $W$, and $\sigma(\simpcomp) \ge (X_1\dots X_n)(Y)$. In fact, $\sigma$ is an order-preserving invertible map from the lattice of simplicial complexes $\simpcomp$ over $X$ to the sublattice of simplicial complex covers of $W$ given by $(X_1\dots X_n)(Y) \le \mathcal{S}$. This allows us to think of the elements of the input lattice as corresponding to simplicial complexes over the input variables. 

The mapping is illustrated in \cref{lattice.fig}. When writing simplicial complexes over the input variables explicitly we use square brackets, in order to distinguish them from simplicial complexes over $W$. So for example, the simplicial complex $[X_1][X_2]$ over the input variables corresponds to the simplicial complex $(X_1 X_2)(X_1 Y)(X_2 Y)$ over~$W$. We write the bottom node of the input lattice as $\bot_\text{in}$, which is equal to $\{\emptyset\}$ when considered as a simplicial complex over the input variables, or $(X_1 \dots X_n)(Y)$ when considered as a simplicial complex cover of $W$.

Every chain in this sublattice provides a decomposition of $I(X_1,\dots, X_n;Y)$ into a sum of non-negative terms. An example of such a decomposition is the chain rule for mutual information,
$$
  I(X_1,X_2;Y) = I(X_1;Y) + I(X_2;Y|X_1)\;,
$$
which can be derived by applying the Pythagorean theorem
to the (not
maximal) chain
$$\bot_\text{in} < [X_2] < [X_1X_2]\;.$$
This corresponds to
$$(X_1X_2)(Y) < (X_1X_2)(X_1 Y) < (X_1X_2 Y)$$
when considered as elements of the constraint lattice. {Applying Equation~\ref{pythagorean}, we have
\begin{multline}
   D_\text{KL}(p_{(X_1X_2 Y)} \| p_{(X_1X_2)(Y)}) =   D_\text{KL}(p_{(X_1X_2)(X_1 Y)} \| p_{(X_1X_2)(Y)})  \\+   D_\text{KL}(p_{(X_1X_2 Y)} \| p_{(X_1X_2)(X_1 Y)}),
\end{multline}
which corresponds term-by-term to the chain rule for mutual information.
}
While this chain is not maximal,
considering the maximal chain, however, yields  a more fine-grained
decomposition:
$$\bot_\text{in} < [X_2] < [X_1][X_2] < [X_1X_2]\;,$$
This, in turn, yields an information decomposition with three non-negative terms,
$$
  I(X_1,X_2;Y) = I(X_1;Y) + \big(I(X_2;Y|X_1) - I_3(X_1,X_2,Y)\big) + I_3(X_1,X_2,Y) \;,
$$
where $I_3(X_1,X_2,Y)$ is Amari's triplewise information. (See \cref{triplewise_example} above.)

In this way we can write $I(X_1,\dots, X_n;Y)$ as a sum of non-negative
terms in many different ways. However, these decompositions in general treat the input variables
asymmetrically. The decompositions are ``path-dependent,'' in the
sense that they depend on which particular chain is chosen. In the
next section we  turn these path-dependent decompositions into a
single path-independent one by suitably averaging over the maximal chains.
\section{Defining the information contribution as a sum over chains}

\label{sec:lattice_decomposition}

We now extend the decomposition along individual chains of the input
lattice to a path-independent information decomposition: this
decomposition will define a separate information contribution for each
of the non-empty subsets $A$ of $V$.

In order to do so, consider the set $\Gamma$ of all maximal chains in the input lattice, that is, all directed paths 
from 
$\bot_\text{in}$ to $[X_1, \dots, X_n]$.
Consider a maximal chain $\gamma
\in \Gamma$.  For any index $l$ in the chain, the collection
$\gamma(l)$ of subsets forms a simplicial complex and 
for each transition from $\gamma(l)$ to $\gamma(l+1)$ a subset $A$ of $V$ is
added to the simplicial complex $\gamma(l)$ 
until the topmost simplicial
complex which ends up containing all subsets of $V$. In particular, the
chain has  the property that all non-empty
subsets $A$ of $V$ are being added at some point along a chain $\gamma$. 

In particular, this ensures that 
there is exactly one $l_\gamma(A)$ that satisfies the following condition: all simplicial complexes $\gamma(l)$, 
$0 \leq l < l_\gamma(A)$, do not contain $A$, and all simplicial
complexes $\gamma(l)$, $l_\gamma(A) \leq l \leq 2^n - 1$, do contain
$A$; i.e.\ $l_\gamma(A)$ denotes the step in the chain $\gamma$ at
which $A$ is added (note that the empty set $\emptyset$ is necessarily
contained in the first complex of each chain, i.e.\ $l_\gamma(\emptyset) = 0$).

Based on this, we now derive a decomposition of the mutual information
between inputs and output ``aligned'' with respect to a particular
subset $A$ of inputs. For this purpose, consider the set
${\mathcal E}_A$ of all edges $(\mathfrak{S},\mathfrak{S}')$ where
$\mathfrak{S}'$ is obtained from $\mathfrak{S}$ by adding $A$, i.e.\ where
$\mathfrak{S}' = \mathfrak{S}\uplus \{A\}$:
\begin{center}
  \includegraphics[width=0.5\textwidth]{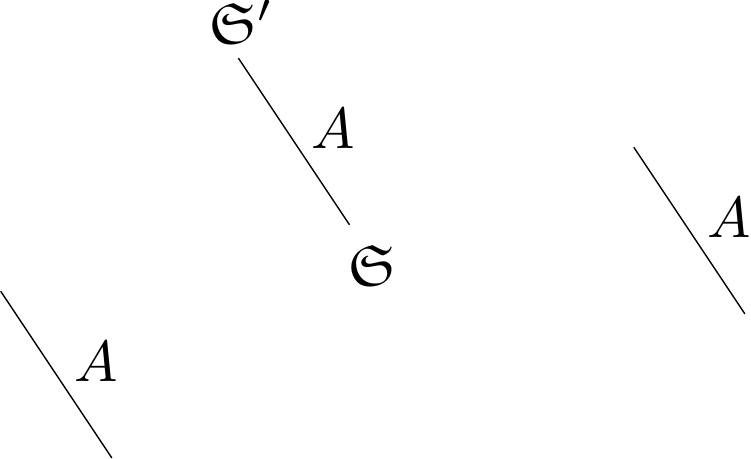}
\end{center}

We furthermore now subdivide the set $\Gamma$ into classes  of
maximal chains, grouped by specific edges $(\mathfrak{S},\mathfrak{S}')$.
Denote by $\Gamma(\mathfrak{S},\mathfrak{S}')$ the set of all
maximal chains $\gamma \in \Gamma$ that contain this particular edge
$(\mathfrak{S},\mathfrak{S}')$:
\begin{center}
  \includegraphics[width=0.55\textwidth]{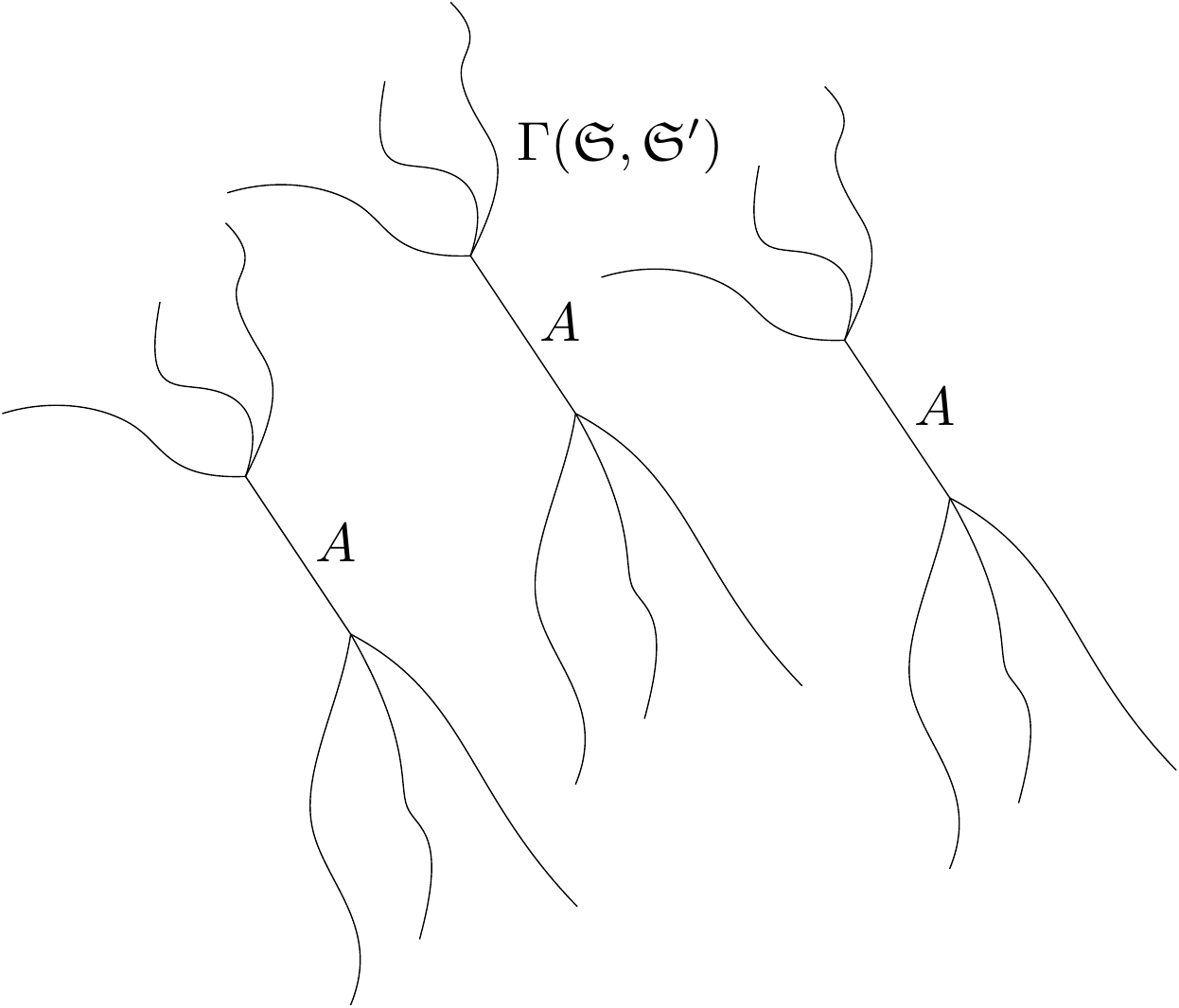}
\end{center}
Then, for any non-empty subset $A$, one has the following partition:
\[
      \Gamma = \biguplus_{(\mathfrak{S},\mathfrak{S}') \in {\mathcal E}_A} \Gamma(\mathfrak{S},\mathfrak{S}') \;.
\]
Every maximal chain is accounted for in this disjoint union, because
for every maximal chain there is exactly one step (edge) at which the set $A$ is added.
This is illustrated in \cref{lattice_3_inputs.fig} for the case of three input variables. 

\begin{figure}
 \centering
 \includegraphics[width=0.3\textwidth]{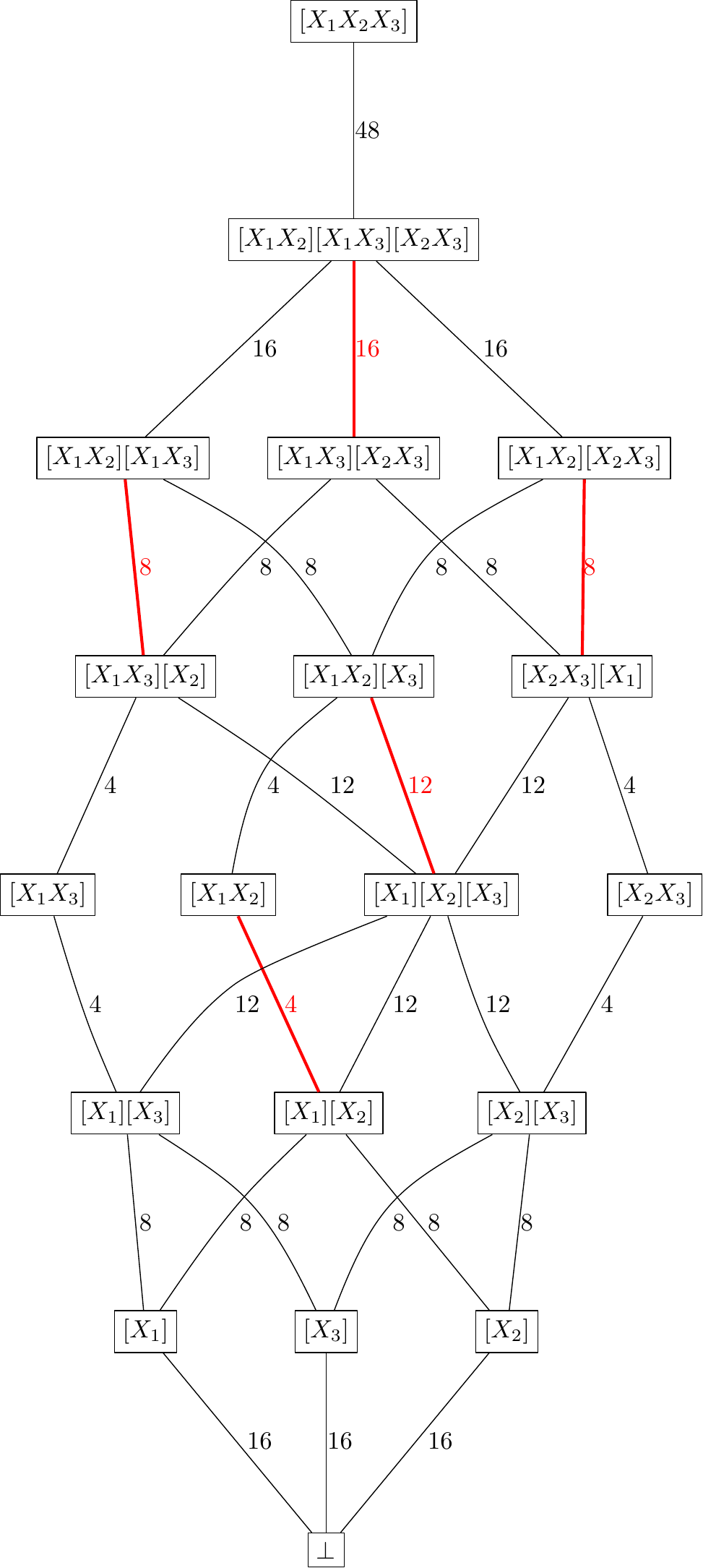}
 \caption{
 	The input lattice for three inputs. Each edge is labelled with the total number of maximal chains that pass through that that edge.  The edges where the subset $\{X_1,X_2\}$ appears for the first time are highlighted in red. Each maximal chain passes through exactly one of these edges. The contribution of $\{X_1,X_2\}$ to the total information is calculated by averaging over these edges, weighted by their path counts (the numbers in red.) {In this lattice there are 48 maximal chains in total.}
 }
 \label{lattice_3_inputs.fig}
\end{figure}

We now consider a probability weighting over the maximal chains, that is, a set of weights
$\mu(\gamma)$ such that $\sum_{\gamma\in\Gamma}\mu(\gamma)=1.$ We obtain     
\begin{align}
I(X_1,\dots,X_n; Y) 
   & = & \kl(p_{[X_1,\dots,X_n]} \, \| \, p_{\bot_\text{in}}) \\
   & = & \sum_{\gamma \in \Gamma} \mu(\gamma) \sum_{l = 1}^{2^n-1} \kl(p_{\gamma(l)} \, \| \, p_{\gamma(l - 1)})  \label{uses-normalization} \\
    & = & \sum_{\gamma \in \Gamma} \mu(\gamma) 
             \sum_{\emptyset \not= A \subseteq V} \kl(p_{\gamma(l_\gamma(A))} \, \| \, p_{\gamma(l_\gamma(A)-1)})        \\
   & = & \sum_{\emptyset \not= A \subseteq V} \sum_{\gamma \in \Gamma}  \mu(\gamma)  \kl(p_{\gamma(l_\gamma(A))} \, \| \, p_{\gamma(l_\gamma(A)-1)})         \\ 
      & = & \sum_{\emptyset \not=A \subseteq V}   \underbrace{\sum_{(\mathfrak{S}, \mathfrak{S}') \in {\mathcal E}_A} 
                \left\{ \sum_{\gamma \in \Gamma(\mathfrak{S}, \mathfrak{S}')} \mu(\gamma) \right\}  }_{=1}
             \kl( p_{\mathfrak{S}'} \, \| \, p_{\mathfrak{S}})   \label{normalization-of-mu} \\
   & = & \sum_{\emptyset \not=A \subseteq V}   \sum_{(\mathfrak{S}, \mathfrak{S}') \in {\mathcal E}_A} \mu(\mathfrak{S}, \mathfrak{S}') \, 
             \kl( p_{\mathfrak{S}'} \, \| \, p_{\mathfrak{S}})    \label{withweights}
\end{align}
Here, we used the short-hand notation $\mu(\mathfrak{S}, \mathfrak{S}')$ for 
$\mu(\Gamma(\mathfrak{S}, \mathfrak{S}')) = \sum_{\gamma \in
  \Gamma(\mathfrak{S}, \mathfrak{S}')} \mu(\gamma)$. 
  The equality (\ref{uses-normalization}) follows because of the Pythagorean theorem (\cref{full_pythagorean_eqn}) and the normalization of the weights $\mu$.
  Note, via (\ref{normalization-of-mu}), that the non-negative weights in the decomposition (\ref{withweights}) 
satisfy the following condition:
\begin{equation}
            \sum_{(\mathfrak{S}, \mathfrak{S}') \in {\mathcal E}_A} \mu(\mathfrak{S}, \mathfrak{S}') \; = \; 1\;. 
\end{equation}
This allows us to interpret 
\begin{equation}
  \label{eq:1}
       I^{(\mu)}_A(X_1,\dots,X_n\,;\,Y)
       \; := \; \sum_{(\mathfrak{S}, \mathfrak{S}') \in {\mathcal E}_A} \mu(\mathfrak{S}, \mathfrak{S}') \, 
             \kl( p_{\mathfrak{S}'} \, \| \, p_{\mathfrak{S}}) 
\end{equation}
as the mean information in $A$ that is not contained in a proper
subset of $A$. 

This gives us a non-negative decomposition of $I(X_1,\dots, X_n ; Y)$
into terms corresponding to each subset of the input variables, but
note that this decomposition is  dependent on the choice of weights $\mu$. 

A natural choice for the weights $\mu$ would be simply to choose the
uniform distribution, i.e.\ $\mu(\gamma) = 1/|\Gamma|$ for all
$\gamma$. It is not completely straightforward to justify the uniform
distribution over maximal chains, because there is no obvious symmetry
that transforms one maximal chain into another. Note, for example,  that the connectivity of the node
$[X_1][X_2][X_3]$ in
\cref{lattice_3_inputs.fig} is different from that of other nodes on the same
level.

Nevertheless, we will now proceed with the uniform distribution as 
a reasonable intuitive choice. It will be shown in  \cref{properties} that choosing $\mu$ this way gives
rise to a decomposition of $I(X_1,\dots,X_n;Y)$ that has some
intuitively desirable properties. For the special choice of
$\mu$ as the uniform distribution we will write
$I^{(\mu)}_A(X_1,\dots,X_n\,;\,Y)$ simply as
$I_A(X_1,\dots,X_n\,;\,Y)$. We denote this as the \emph{information
  contribution} of $A$ to $Y$. In
\cref{cooperative_game_theory} we will then proceed to show that above
originally merely intuitive choice of $\mu$ as uniform distribution
finds a deeper justification in the theory of cooperative game theory.

For the practical calculation of  $I_A$ we first calculate the number
$n_{(\simpcomp,\simpcomp')}$ of maximal chains that pass through each
edge $(\simpcomp,\simpcomp')$ in the Hasse diagram of the input
lattice. These numbers are shown in \cref{lattice_3_inputs.fig}, as well
as the total number of maximal chains, $|\Gamma|$.
  For each node
$\simpcomp$ in the lattice we calculate the distribution $p_\simpcomp$
by iterative scaling \citep[][chapter 5]{csiszar2004information}, from which we obtain
$D_\text{KL}(p_\simpcomp \| p_{\{\emptyset\}})$. We then find the set
$\mathcal{E}_A$ of edges in which a given predictor $A$ is added for
the first time in a maximal chain (for the example of $A=\{X_1, X_2\}$
this is shown in red in \cref{lattice_3_inputs.fig}). We then
calculate our measure $I_A$ from the Kullback-Leibler gains accrued on
these paths by adding the set $A$ of interest, weighted by the
chain counts $n_{(\simpcomp,\simpcomp')}$ of the respective edges:
\begin{equation}
 I_A(X_1,\dots,X_n\,;\,Y) = \frac{1}{|\Gamma|} \sum_{(\simpcomp,\simpcomp')\in \mathcal{E}_A} n_{(\simpcomp,\simpcomp')} \big( D_\text{KL}(p_{\simpcomp'} \| p_{\{\emptyset\}}) - D_\text{KL}(p_\simpcomp \| p_{\{\emptyset\}}) \big)\;.
 \label{our_measure}
\end{equation}

\section{Properties of the information contribution}
\label{properties}

We now prove the following properties of the information contribution, as a decomposition of the mutual information.

\begin{theorem}
\label{properties.thm}
For $A\subseteq \{X_1, \dots, X_n\}$ we have
\begin{enumerate}[I.]
 \item $I_A(X_1,\dots, X_n\,;\,Y) \ge 0$ (\emph{nonnegativity}) \label{nonnegativity.property}
 \item $\sum_{A\in 2^V} I_A(X_1,\dots, X_n\,;\,Y) = I(X_1,\dots, X_n\,;\,Y)$ (\emph{completeness}) \label{completeness.property}
 \item $I_A(X_1,\dots, X_n\,;\,Y)$ is invariant under permutations of $X_1,\dots,X_n$. (\emph{symmetry}) \label{symmetry.property}
 \item $I_A(X_1,\dots,X_n\,;\,(X_1,\dots,X_n)) = 0$ if $|A|>1$.
   (\emph{{singleton}}) \label{identity.property} 
 \item if $X_i = (X'_i, X''_i)$ for all $i$, $Y =  (Y', Y'')$, and 
 \[p(x_1, \dots, x_n, y) =  p(x'_1, \dots, x'_n, y') \, p(x''_1, \dots, x''_n, y'')\;,\]
 then 
\[
 I_A(X_1, \dots, X_n\,;\,Y) =  I_{A'}(X'_1, \dots, X'_n\,;\,Y') +  I_{A''}(X''_1, \dots, X''_n\,;\,Y'')\;,
\]
where $A' = \{X'_i : (X'_i, X''_i) \in A\}$ and $A'' = \{X''_i : (X'_i, X''_i) \in A\}$. (\emph{additivity}) \label{additivity.property}
\end{enumerate} 
As we discuss below, the singleton property is somewhat analogous to the identity axiom
proposed by \citep{harder13:_bivar} for partial information
decomposition measures, which effectively says that there should be no synergy
terms if the output is simply an indentical copy of the input.

\end{theorem}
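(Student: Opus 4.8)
The plan is to dispatch the five properties largely independently, leaning on the Pythagorean structure of \cref{pythagorean,full_pythagorean_eqn} and on the definition \eqref{our_measure}. The observation used throughout is that for a covering edge $(\simpcomp,\simpcomp')$ the Pythagorean theorem applied to $\bot_\text{in}\le\simpcomp<\simpcomp'$ collapses the bracketed difference in \eqref{our_measure} to a single nonnegative divergence, $D_\text{KL}(p_{\simpcomp'}\|p_{\{\emptyset\}})-D_\text{KL}(p_{\simpcomp}\|p_{\{\emptyset\}})=D_\text{KL}(p_{\simpcomp'}\|p_{\simpcomp})\ge 0$. Property~\ref{nonnegativity.property} is then immediate, since the counts $n_{(\simpcomp,\simpcomp')}$ and $|\Gamma|$ are positive. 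Property~\ref{completeness.property} requires almost no new work: the chain of identities \eqref{uses-normalization}--\eqref{withweights} already gives $\sum_{\emptyset\neq A\subseteq V}I^{(\mu)}_A=I(X_1,\dots,X_n;Y)$ for every probability weighting $\mu$, in particular the uniform one, and $I_\emptyset=0$ because $\mathcal{E}_\emptyset$ is empty, so extending the sum to all of $2^V$ changes nothing.

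For Property~\ref{symmetry.property} I would argue by equivariance. A permutation $\pi$ of the inputs acts as an automorphism of the input lattice sending $\simpcomp$ to its relabelling $\pi\simpcomp$, preserving the covering relation and hence the counts $n_{(\simpcomp,\simpcomp')}$ and $|\Gamma|$; it carries $\mathcal{E}_A$ bijectively onto $\mathcal{E}_{\pi A}$; and since relabelling variables leaves entropies and divergences unchanged, each summand $D_\text{KL}(p_{\simpcomp'}\|p_{\simpcomp})$ in \eqref{our_measure} is preserved. Thus the value computed for $\pi A$ on the permuted system equals that for $A$ on the original, which is the asserted symmetry.

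The first substantive property is the singleton property. The key lemma is that when $Y=(X_1,\dots,X_n)$ is a deterministic copy of all inputs, the split distribution $p_{\sigma(\simpcomp)}$ depends on $\simpcomp$ only through its support $S(\simpcomp)=\bigcup_{A\in\simpcomp}A$. Indeed, for a deterministic copy the constraint $q((X_i)_{i\in A},Y)=p((X_i)_{i\in A},Y)$ defining $M_{\sigma(\simpcomp)}$ forces the input coordinates $(X_i)_{i\in A}$ to agree almost surely with the corresponding coordinates of $Y$; since equality of tuples is coordinate-wise equality, the whole constraint set reduces to demanding $X_i=Y_i$ a.s.\ for each $i\in S(\simpcomp)$ together with the fixed marginals inherited from $\bot_\text{in}$, which depends only on $S(\simpcomp)$. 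Hence $p_{\sigma(\simpcomp)}$, the maximum-entropy element of $M_{\sigma(\simpcomp)}$ by \eqref{maxent}, depends only on $S(\simpcomp)$. Now if $|A|>1$, any edge $(\simpcomp,\simpcomp')\in\mathcal{E}_A$ has $\simpcomp'=\simpcomp\uplus\{A\}$ with $\simpcomp$ already containing every proper subset of $A$, in particular each singleton $\{X_i\}$ with $i\in A$; thus $A\subseteq S(\simpcomp)$ and $S(\simpcomp')=S(\simpcomp)$, so $p_{\simpcomp'}=p_{\simpcomp}$, every summand vanishes, and $I_A=0$. The argument fails for $|A|=1$, exactly as the statement requires, since adding a singleton enlarges the support.

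For additivity the crucial fact is that split distributions factorise over the independent product. I would first note that the combined, primed, and double-primed systems share the same index set $\{1,\dots,n\}$, hence the same input lattice, chain counts, and---since $A'$ and $A''$ correspond to the common index set $\{i:X_i\in A\}$---the same edge families $\mathcal{E}_A$. It then suffices to prove edge by edge that $p^{\mathrm{comb}}_{\sigma(\simpcomp)}=p'_{\sigma(\simpcomp)}\otimes p''_{\sigma(\simpcomp)}$, after which additivity of the Kullback--Leibler divergence over products gives $D_\text{KL}(p^{\mathrm{comb}}_{\simpcomp'}\|p^{\mathrm{comb}}_{\simpcomp})=D_\text{KL}(p'_{\simpcomp'}\|p'_{\simpcomp})+D_\text{KL}(p''_{\simpcomp'}\|p''_{\simpcomp})$, and summing over $\mathcal{E}_A$ with the common weights yields $I_A=I_{A'}+I_{A''}$. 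To establish the factorisation I would check that $p'_{\sigma(\simpcomp)}\otimes p''_{\sigma(\simpcomp)}$ lies in $M^{\mathrm{comb}}_{\sigma(\simpcomp)}$ (its marginals on $(X_1\dots X_n)$ and on each $A\cup\{Y\}$ factor correctly) and in $\widebar{E}^{\mathrm{comb}}_{\sigma(\simpcomp)}$ (a product of factorising members factorises with factors depending on the correct coordinates), then invoke the intersection theorem \citep[Theorem~2.8]{ay2017book} to identify this unique common point with $p^{\mathrm{comb}}_{\sigma(\simpcomp)}$. The main obstacle I anticipate lies precisely in these two lemmas: verifying that the deterministic-copy constraints collapse to coordinate-wise agreement, and that the product of subsystem split distributions genuinely lands in both the combined mixture family and the combined exponential-family closure so that the intersection theorem applies; the accompanying bookkeeping about identical lattices and chain counts is routine but must be stated for the term-by-term summation to go through.
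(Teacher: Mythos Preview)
Your proposal is correct and tracks the paper's argument closely for Properties~\ref{nonnegativity.property}--\ref{symmetry.property}. For the two substantive properties you diverge slightly in technique, though the underlying ideas coincide. For the singleton property, the paper works marginal by marginal: given that the singleton constraints already force $X_i=\bar X_i$ and $X_j=\bar X_j$ almost surely, it checks directly that the $(X_i,X_j,Y)$-marginal of $p_{\simpcomp\setminus\{A\}}$ already equals the true one, so adding $A=\{X_i,X_j\}$ contributes nothing; your ``support lemma'' packages the same observation as the statement that $M_{\sigma(\simpcomp)}$, hence $p_{\sigma(\simpcomp)}$, depends only on $\bigcup_{B\in\simpcomp}B$. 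For additivity, the paper obtains the factorisation $p^{\mathrm{comb}}_{\sigma(\simpcomp)}=p'_{\sigma(\simpcomp)}\otimes p''_{\sigma(\simpcomp)}$ by writing out the Lagrange conditions for the I-projection of a product distribution onto a mixture family with separable constraints, whereas you verify membership of the product in both $M^{\mathrm{comb}}_{\sigma(\simpcomp)}$ and $\widebar E^{\mathrm{comb}}_{\sigma(\simpcomp)}$ and invoke the intersection theorem. Your route sidesteps the implicit interior-point assumption behind the Lagrange computation; the small price is that you must note that a product of limits from $E'_{\sigma(\simpcomp)}$ and $E''_{\sigma(\simpcomp)}$ is a limit from $E^{\mathrm{comb}}_{\sigma(\simpcomp)}$, which is immediate by taking products of approximating sequences.
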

\begin{proof}
(\ref{nonnegativity.property}) follows from the nonnegativity of the Kullback-Leibler divergence. (\ref{completeness.property}) is proved in Section~\ref{sec:lattice_decomposition} above. (\ref{symmetry.property}) is true by construction, since the values of the Kullback-Leibler divergences do not depend on the order in which the input variables are considered, and the uniform distribution over maximal chains is invariant to reordering the input variables.

To prove (\ref{identity.property}), write $Y=(\widebar X_1, \dots,
\widebar X_n)$, where $\widebar X_i$ is considered to be a copy of
$X_i$, in the sense that $X_i$ and $\widebar X_i$ are separate random
variables but we have 
\begin{equation}
\label{copy_variable}
 p(x_1,\dots, x_n, \widebar x_1, \dots, \widebar x_n) = 
\begin{cases}
 p(x_1,\dots, x_n) &\text{if $x_1=\widebar x_1, \;\dots,\; x_n=\widebar x_n\;,$} \\
 0 & \text{otherwise\;,}
\end{cases}
\end{equation}
which implies that $p(x_i, \widebar x_i) = \delta_{x_i,\widebar
  x_i} p(x_i)$, for every $i$.
We then have
\[p_{\bot_\text{in}}(X_1,\dots,X_n,\widebar X_1, \dots, \widebar X_n) =
  p(X_1)\dots p(X_n)p(\widebar X_1, \dots, \widebar X_n)\;.
\]

Consider now any edge $(\simpcomp{\setminus}\{A\}, \simpcomp)$
in the Hasse diagram of the input lattice. There are two cases to consider:
\begin{enumerate}[(i)]
 \item $A = \{X_i\}$ for some $i$. In this case $\sigma(\simpcomp)$ contains the element $\{X_i, Y\}$. Therefore, from its definition, the marginal
   $p_\simpcomp(X_i,Y)$ must match the true marginal $p(X_i,Y)$, which implies that $p_\simpcomp(x_i, \widebar x_i) =
   \delta_{x_i,\widebar x_i} p(x_i)$. 
 However, $\sigma(\simpcomp\setminus\{A\})$ does not contain the element $\{X_i, Y\}$, and so the marginal 
$p_{\simpcomp\setminus\{A\}}(x_i, \widebar x_i)$ may in general differ from $\delta_{x_i,\widebar x_i} p(x_i)$,
   and $D_\text{KL}(p_{\simpcomp}\|p_{\simpcomp\setminus A})$ can be nonzero.
 \item $|A|>1$. Consider first the case that $A = \{X_i, X_j\}$.
   Because $\simpcomp$ is a simplicial complex, we have that
   $\{X_i\}\in \simpcomp{\setminus}\{A\}$ and $\{X_j\}\in
   \simpcomp{\setminus}\{A\}$. Therefore $p_{\simpcomp\setminus
     \{A\}}$ has to match the constraints $p_{\simpcomp\setminus
     \{A\}}(x_i, \widebar x_i) = \delta_{x_i,\widebar x_i} p(x_i)$ and
   $p_{\simpcomp\setminus \{A\}}(x_j, \widebar x_j) = \delta_{x_j,\widebar
     x_j} p(x_j)$. We also have, { from \cref{copy_variable},
that $p_{\simpcomp\setminus \{A\}}( \widebar x_i, \widebar x_j) = p(
 \widebar x_i,  \widebar x_j)$. From these constraints we have
\[p_{\simpcomp\setminus A}(x_i,x_j,\widebar x_i, \widebar x_j) = p(\widebar x_i,  \widebar x_j)\delta_{x_i,\widebar x_i}  \delta_{x_j,\widebar x_j}  = p(x_i,x_j,\widebar x_i, \widebar x_j)\;.\] 
}
Therefore $p_{\simpcomp\setminus A}$ already meets the constraint that
the marginals for $X_i, X_j, Y$ match those of the true distribution
and minimising $D_\text{KL}(p_{\simpcomp}\|p_{\simpcomp\setminus A})$ subject to this constraint must result in zero. The proof of this is similar if $|A|>2$.
 \end{enumerate}
Therefore every term in \cref{withweights}
will be zero if $|A|>1$, but in general they can be nonzero if
$|A|=1$. 

To prove (\ref{additivity.property}) we first note the following
general additivity property of the Kullback-Leibler divergence. Let $Z'$ and $Z''$ be two co-distributed random variables,  
let $p_0(z',z'')=p_0(z')p_0(z'')$ for
each $z'$, $z''$ in the sample spaces of $Z'$, $Z''$, that is, render
the two random variables  independent according to the distribution
$p_0$. Then let $M$ be a mixture family defined by constraints that
depend only on either $Z'$ or  $Z''$. That is, 
\begin{align}
\label{separable_m_family}
\begin{split}
 M=\Big\{q: \sum_{z'}q(z')f^{(i)}(z')=F^{(i)}&\quad(i=1,\dots,r)\;, \\
	\sum_{z''}q(z'')g^{(j)}(z'')=G^{(j)}&\quad(j=1,\dots,s)\Big\}\;.
\end{split}
\end{align}
Calculating $\argmin_{p\in M} D_\text{KL}(p\|p_0)$, introducing Lagrange multipliers in the usual way, gives us 
\[
p(z',z'') = p_0(z')p_0(z'')e^{\sum_i \lambda_i f^{(i)}(z') + \sum_j \eta_j g^{(j)}(z'') - \psi } = p(z')p(z'')\;,
\]
where $p(z')=p_0(z')e^{\sum_i \lambda_i f^{(i)}(z') -\psi'}$ and
$p(z'') = p_0(z'')e^{\sum_j \eta g^{(j)}(z'') -\psi''}$. Note that
these are the same distributions that would be obtained if the
projection  were performed on each of the marginals rather than the joint distribution. We have both that $Z'$ and $Z''$ remain independent after projecting onto $M$, and also that $D_\text{KL}(p(Z',Z'')\,\|\,p_0(Z',Z'')) = D_\text{KL}(p(Z')\,\|\,p_0(Z'))  + D_\text{KL}(p(Z'')\,\|\,p_0(Z''))$.

{
Now consider constructing a system of random variables $X_i = (X'_i, X''_i)$, $Y =  (Y', Y'')$, according to the condition of property~\ref{additivity.property}. Each of the split distributions is defined as a projection from the product distribution onto a mixture family. By construction, all of these mixture families satisfy \cref{separable_m_family}. Because of this, every term in \cref{eq:1} can be written as a sum of the corresponding terms for the systems $\{X'_1,\dots,X'_n,Y'\}$ and $\{X''_1,\dots,X''_n,Y''\}$. The additivity property follows from this.
}

\end{proof}

We note that these properties do not uniquely determine the
information contribution measure. In particular, one could choose a
different measure $\mu$ over the maximal chains besides the uniform
measure; there are in general many such measures that would yield an
information measure satisfying \cref{properties.thm}. {To see this,
note that properties \ref{nonnegativity.property}, \ref{completeness.property},
\ref{identity.property} and \ref{additivity.property} do not depend on the choice of
measure $\mu$, and hence don't constrain it. Property \ref{symmetry.property}
does restrict the choice of measure, but for more than two inputs the number 
of paths in the lattice is greater than the number of inputs,
and consequently
the symmetry axiom does not provide enough constraint to uniquely specify
$\mu$.
}
However, as argued above, the uniform measure is
a natural choice, and we will show below that its use can be more
systematically justified from the perspective of cooperative game
theory.

\subsection{Comparison to partial information decomposition}

As noted above, the information contribution $I_A$ is not a partial
information decomposition (PID) measure, because it decomposes the
mutual information into a different number of terms than the latter. In the case of two input variables $X_1$ and $X_2$, the PID has four terms (synergy, redundancy, and two unique terms), whereas the information contribution has only three, $I_{\{X_1\}}$, $I_{\{X_2\}}$ and $I_{\{X_1, X_2\}}$. The joint term, $I_{\{X_1, X_2\}}$, behaves somewhat similarly to a synergy term, and the two singleton contributions $I_{\{X_1\}}$ and $I_{\{X_2\}}$ are roughly analogous to the two unique information terms, but there is no term corresponding to shared/redundant information. For more than two inputs, the terms of a partial information measure can be expressed in terms of a lattice known as the redundancy lattice \cite{williams2010nonnegative}, which is different from the constraint lattice or the input lattice discussed above.

Within the PID framework, \citep{harder13:_bivar} introduced the \emph{identity axiom}, which states that a measure of redundant information $I_\cap$, should satisfy
$$
 I_\cap(X_1,X_2;(X_1,X_2)) = I(X_1;X_2)\;.
$$
This is equivalent to saying that the corresponding measure of synergy, $I_\cup$, should be zero in the case where the output is a copy of its two input variables:
\begin{equation}
  I_\cup(X_1,X_2;(X_1,X_2)) = 0\;.
  \label{pid_identity}
\end{equation}
It was proven by \cite{rauh2014reconsidering} that there can be no
non-negative PID measure that satisfies all of Williams and Beer's
axioms together with the identity axiom. {This can be achieved if we restrict ourselves to two input variables, but for three or more inputs there are distributions for which it cannot be achieved. (See Example \textsc{RBOJ} below.)}

While our information contribution measure is not a PID measure, 
 if we take the joint term $I_{\{X_1,
  X_2\}}$ to be analogous to a synergy term, then the singleton decomposition property
(property~\ref{identity.property}), for two inputs, is roughly analogous to
\cref{pid_identity}. Therefore our measure obeys an analog of
the identity axiom for PID measures, alongside analogs of the
non-negativity and symmetry axioms for PID measures. This is possible
only because the information contribution is not a PID measure, and
hence does not have to obey the precise set of Williams-Beer lattice axioms.

It is also worth comparing our information decomposition measure with the framework proposed by \cite{james2017multivariate}, which seeks
a different kind of information decomposition from PID. In this framework, instead of decomposing the mutual information between
a set of sources and a target, one instead wishes to decompose the joint entropy $H(Z_1, \dots, Z_n)$ of several jointly distributed
random variables, into a sum of terms corresponding to each subset of the variables. Our framework sits somewhere between this approach
and PID, since we have the distinction between the inputs and the target, but we decompose $I(X_1,\dots,X_n;Y)$ into a sum of terms corresponding
to subsets of the inputs, in a similar manner to James and Crutchfield's proposal.

\section{Examples}
\label{examples.sec}

We now explore a few examples of our information contribution measure
(which we will also sometimes denote by Shapley information
decomposition, as will be justified by the
game-theoretic analysis in section~\ref{cooperative_game_theory}
below). Here, we apply it to joint distributions between a target and two or three
inputs. Note that our framework does not require any restrictions on
these joint distributions. In particular, it is expressly not assumed that the
inputs are independent of one another. Importantly, the measures
will in general be affected by dependent inputs,
which is a desirable property of such a measure, because it has been
observed before that appropriate attributions of joint
interactions should depend on  input correlations \citep[see the discussion
on \emph{source} vs.\ \emph{mechanistic}
redundancy in][]{harder13:_bivar}.

We take most of our examples from the literature on partial
information decomposition, in particular \citep{williams2010nonnegative,griffith2014quantifying,harder13:_bivar, bertschinger2014quantifying}. These examples are relatively standardised, and give some intuition for how our measure compares to PID measures. 

We first explore some basic examples with two predictors, which are
presented in \cref{twoinputs.tbl}. For each of these examples,
the method attributes an amount of information contribution to the predictors $\{X_1\}$, $\{X_2\}$ and $\{X_1, X_2\}$.
The numbers assigned to these sets are nonnegative and, together, they
sum up to the mutual information $I(X_1,X_2;Y)$. 
This is in many ways similar to the partial information decomposition
framework, but we note again that the information contribution decomposition
has fewer terms than the partial information decomposition (e.g.\ three rather than four in the two-input case).

\begin{table}
\centering

%
%
\begin{tabular}[t]{ccc|c}
\multicolumn{4}{c}{\textsc{Rdn}}
\\[0.4em]
$X_1$ & $X_2$ & $Y$ & $p$\\
\hline
0 & 0 & 0 & $1/2$ \\
1 & 1 & 1 & $1/2$ \\
\end{tabular}
\qquad
\begin{tabular}[t]{c|c}
	\multicolumn{2}{c}{}
	\\[0.4em]
	predictor & contribution (bits) \\
	\hline
	\\[-1em]
$\{X_2\}$ & $1/2$\\ 
$\{X_1\}$ & $1/2$\\ 
$\{X_1,X_2\}$ & $0$\\ 
\end{tabular}
\\[1em]

%
%
\begin{tabular}[t]{ccc|c}
\multicolumn{4}{c}{\textsc{Xor}}
\\[0.4em]
$X_1$ & $X_2$ & $Y$ & $p$\\
\hline
0 & 0 & 0 & $1/4$ \\
0 & 1 & 1 & $1/4$ \\
1 & 0 & 1 & $1/4$ \\
1 & 1 & 0 & $1/4$ \\
\end{tabular}
\qquad
\begin{tabular}[t]{c|c}
	\multicolumn{2}{c}{}
	\\[0.4em]
	predictor & contribution (bits) \\
	\hline
	\\[-1em]
$\{X_2\}$ & $0$\\ 
$\{X_1\}$ & $0$\\ 
$\{X_1,X_2\}$ & $1$\\ 
\end{tabular}
\\[1em]

%
%
\begin{tabular}[t]{ccc|c}
\multicolumn{4}{c}{\textsc{2 bit copy}}
\\[0.4em]
$X_1$ & $X_2$ & $Y$ & $p$\\
\hline
0 & 0 & 0 & $1/4$ \\
0 & 1 & 1 & $1/4$ \\
1 & 0 & 2 & $1/4$ \\
1 & 1 & 3 & $1/4$ \\
\end{tabular}
\qquad
\begin{tabular}[t]{c|c}
	\multicolumn{2}{c}{}
	\\[0.4em]
	predictor & contribution (bits) \\
	\hline
	\\[-1em]
$\{X_2\}$ & $1$\\ 
$\{X_1\}$ & $1$\\ 
$\{X_1,X_2\}$ & $0$\\ 
\end{tabular}
\\[1em]

%
%
\begin{tabular}[t]{ccc|c}
\multicolumn{4}{c}{\textsc{And}}
\\[0.4em]
$X_1$ & $X_2$ & $Y$ & $p$\\
\hline
0 & 0 & 0 & $1/4$ \\
0 & 1 & 0 & $1/4$ \\
1 & 0 & 0 & $1/4$ \\
1 & 1 & 1 & $1/4$ \\
\end{tabular}
\qquad
\begin{tabular}[t]{c|c}
	\multicolumn{2}{c}{}
	\\[0.4em]
	predictor & contribution (bits) \\
	\hline
	\\[-1em]
$\{X_2\}$ & $0.40563765$\\ 
$\{X_1\}$ & $0.40563762$\\ 
$\{X_1,X_2\}$ & $0$\\ 
\end{tabular}
\\[1em]

%
%
\begin{tabular}[t]{ccc|c}
\multicolumn{4}{c}{\textsc{SynRdn}}
\\[0.4em]
$X_1$ & $X_2$ & $Y$ & $p$\\
\hline
0 & 0 & 0 & $1/8$ \\
0 & 1 & 1 & $1/8$ \\
1 & 0 & 1 & $1/8$ \\
1 & 1 & 0 & $1/8$ \\
2 & 2 & 2 & $1/8$ \\
2 & 3 & 3 & $1/8$ \\
3 & 2 & 3 & $1/8$ \\
3 & 3 & 2 & $1/8$ \\
\end{tabular}
\qquad
\begin{tabular}[t]{c|c}
	\multicolumn{2}{c}{}
	\\[0.4em]
	predictor & contribution (bits) \\
	\hline
	\\[-1em]
$\{X_2\}$ & $1/2$\\ 
$\{X_1\}$ & $1/2$\\ 
$\{X_1,X_2\}$ & $1$\\ 
\end{tabular}
\\[1em]

\caption{Examples of the Shapley information decomposition for several simple two-predictor cases. For each example the joint distribution is shown on the left, and on the right we tabulate $I_{\{X_1\}}$, $I_{\{X_2\}}$ and $I_{\{X_1,X_2\}}$, the contributions made by the two singleton predictors $\{X_1\}$ and $\{X_2\}$ and the joint predictor $\{X_1,X_2\}$. These three values always sum to the mutual information $I(X_1,X_2;Y)$. All logarithms are taken to base 2, so that the numbers are in bits. The interpretation of these examples is given in the text.}
\label{twoinputs.tbl}
\end{table}

In the example \textsc{Rdn} in \cref{twoinputs.tbl}, the two inputs share a single bit of information about the target. In the PID framework, this typically corresponds to one bit of shared or redundant information. However, the Shapley decomposition does not try to identify redundancy as a separate term, and instead assigns half a bit to each of the predictors. The joint predictor $\{X_1, X_2\}$ is assigned a zero contribution. This reflects the fact that once the correlations between $Y$ and the two individual predictors are known the three-way correlations are already determined, and so learning them does not reveal any extra information.

In the second example, \textsc{Xor}, we have $Y=X_1\oplus X_2$, where
$\oplus$ is the exclusive-or function. In this example, no contribution
is assigned to the individual predictors $X_1$ and $X_2$, but one bit
is assigned to the joint predictor $\{X_1,X_2\}$. This can be seen as
a kind of synergy measure --- it says that all of the information that
the predictors give about the target is found in the three-way
correlations between  $X_1$, $X_2$ and $Y$, and none in the pairwise
correlations between either predictor and the target. Interpreting
this causally, it means that the causal influences of $X_1$ and $X_2$
on $Y$ are strongly tied together. While the Shapley decomposition
does not have a term corresponding to redundancy, we see that it characterises synergy in a rather intuitive way.

Our third and fourth examples are discussed in \citep{harder13:_bivar}. The ``two bit copy'' operation plays an
important role in the PID literature in the context of the identity
axiom. The Shapley decomposition assigns one bit each to both of the
predictors and none to the joint predictor, reflecting the fact that
the two inputs each provide a different piece of information about the
target. This can be compared to the PID framework, since it is usually
seen as desirable for a PID measure to assign zero bits of synergy in
this case. Note, however, that because our decomposition does not try
to identify redundancy, it does not distinguish between this case and
the case of \textsc{Rdn}, where the information is also shared equally
between the two predictors. The results for the \textsc{And}
distribution are similar, telling us that there is also no synergy in this case. {This is because for \textsc{And} the joint distribution can be inferred completely
by knowing the marginals $(X_1,Y)$, $(X_2,Y)$ and $(X_1,X_2)$, and consequently there is no 
triplewise information.}

Our final two-predictor example is \textsc{SynRdn}, which can be formed by combining the \textsc{Xor} example with an independent copy of the \textsc{Rdn} example. The values assigned to the two predictors and the joint predictor are simply the sum of their values in the original two examples, which is a result of the additivity property (\cref{properties.thm}, part \ref{additivity.property}).

\begin{table}
\centering

%
%
\begin{tabular}[t]{cccc|c}
\multicolumn{4}{c}{\textsc{Parity}}
\\[0.4em]
$X_1$ & $X_2$ & $X_3$ & $Y$ & $p$\\
\hline
0 & 0 & 0 & 0 & $1/8$ \\
0 & 0 & 1 & 1 & $1/8$ \\
0 & 1 & 0 & 1 & $1/8$ \\
0 & 1 & 1 & 0 & $1/8$ \\
1 & 0 & 0 & 1 & $1/8$ \\
1 & 0 & 1 & 0 & $1/8$ \\
1 & 1 & 0 & 0 & $1/8$ \\
1 & 1 & 1 & 1 & $1/8$ \\
\end{tabular}
\qquad
\begin{tabular}[t]{c|c}
	\multicolumn{2}{c}{}
	\\[0.4em]
	predictor & contribution (bits) \\
	\hline
	\\[-1em]
$\{X_1\}$ & $0$\\ 
$\{X_2\}$ & $0$\\ 
$\{X_3\}$ & $0$\\ 
$\{X_1,X_2\}$ & $0$\\ 
$\{X_1,X_3\}$ & $0$\\ 
$\{X_2,X_3\}$ & $0$\\ 
$\{X_1,X_2,X_3\}$ & $1$\\ 
\hline
total & 1 \\ 
\end{tabular}
\\[1em]

%
%
\begin{tabular}[t]{cccc|c}
\multicolumn{4}{c}{\textsc{XorMultiCoal}}
\\[0.4em]
$X_1$ & $X_2$ & $X_3$ & $Y$ & $p$\\
\hline
4 & 0 & 4 & 0 & $1/8$ \\
0 & 2 & 2 & 0 & $1/8$ \\
1 & 1 & 0 & 0 & $1/8$ \\
5 & 3 & 6 & 0 & $1/8$ \\
5 & 1 & 4 & 1 & $1/8$ \\
1 & 3 & 2 & 1 & $1/8$ \\
0 & 0 & 0 & 1 & $1/8$ \\
4 & 2 & 6 & 1 & $1/8$ \\
\end{tabular}
\qquad
\begin{tabular}[t]{c|c}
	\multicolumn{2}{c}{}
	\\[0.4em]
	predictor & contribution (bits) \\
	\hline
	\\[-1em]
$\{X_1\}$ & $0$\\ 
$\{X_2\}$ & $0$\\ 
$\{X_3\}$ & $0$\\ 
$\{X_1,X_2\}$ & $1/3$\\ 
$\{X_1,X_3\}$ & $1/3$\\ 
$\{X_2,X_3\}$ & $1/3$\\ 
$\{X_1,X_2,X_3\}$ & $0$\\ 
\hline
total & 1 \\ 
\end{tabular}
\\[1em]
%
%

%
%
\begin{tabular}[t]{cccc|c}
\multicolumn{4}{c}{\textsc{RBOJ}}
\\[0.4em]
$X_1$ & $X_2$ & $X_3$ & $Y$ & $p$\\
\hline
0 & 0 & 0 & 0 & $1/4$ \\
0 & 1 & 1 & 1 & $1/4$ \\
1 & 0 & 1 & 2 & $1/4$ \\
1 & 1 & 0 & 3 & $1/4$ \\
\end{tabular}
\qquad
\begin{tabular}[t]{c|c}
	\multicolumn{2}{c}{}
	\\[0.4em]
	predictor & contribution (bits) \\
	\hline
	\\[-1em]
$\{X_1\}$ & $2/3$\\ 
$\{X_2\}$ & $2/3$\\ 
$\{X_3\}$ & $2/3$\\ 
$\{X_1,X_2\}$ & $0$\\ 
$\{X_1,X_3\}$ & $0$\\ 
$\{X_2,X_3\}$ & $0$\\ 
$\{X_1,X_2,X_3\}$ & $0$\\ 
\hline
total & 2 \\ 
\end{tabular}
\\[1em]
%
%

%
%
\begin{tabular}[t]{cccc|c}
\multicolumn{4}{c}{\textsc{three way And}}
\\[0.4em]
$X_1$ & $X_2$ & $X_3$ & $Y$ & $p$\\
\hline
0 & 0 & 0 & 0 & $1/8$ \\
0 & 0 & 1 & 0 & $1/8$ \\
0 & 1 & 0 & 0 & $1/8$ \\
0 & 1 & 1 & 0 & $1/8$ \\
1 & 0 & 0 & 0 & $1/8$ \\
1 & 0 & 1 & 0 & $1/8$ \\
1 & 1 & 0 & 0 & $1/8$ \\
1 & 1 & 1 & 1 & $1/8$ \\
\end{tabular}
\qquad
\begin{tabular}[t]{c|c}
	\multicolumn{2}{c}{}
	\\[0.4em]
	predictor & contribution (bits) \\
	\hline
	\\[-1em]
$\{X_1\}$ & $0.18118725$\\ 
$\{X_2\}$ & $0.18118724$\\ 
$\{X_3\}$ & $0.18118724$\\ 
$\{X_1,X_2\}$ & $0$\\ 
$\{X_1,X_3\}$ & $0$\\ 
$\{X_2,X_3\}$ & $0$\\ 
$\{X_1,X_2,X_3\}$ & $0$\\ 
\hline
total & 0.54356444 \\ 
\end{tabular}
\\[1em]
%
%

\caption{Some examples of our measure, applied to joint distributions between a target and three inputs. The interpretation of these examples is given in the text.}
\label{threeinputs.tbl}
\end{table}

\Cref{threeinputs.tbl} shows the results for three input
variables. In this case the method assigns an amount of information to
every non-empty subset of $\{X_1, X_2, X_3\}$, representing the share
of the mutual information provided by that set of inputs. The first
example, \textsc{Parity}, is a three-input analog of the \textsc{Xor}
example, since $Y = X_1 \oplus X_2 \oplus X_3$. In this example it is
not possible to infer anything about the value of $Y$ until the values
of all three inputs are known. Correspondingly, the method assigns all
of the total mutual information (1 bit) to the predictor $\{X_1, X_2, X_3\}$ and none to the others.

Our second example, \textsc{XorMultiCoal} (which we take from \citep{griffith2014quantifying}) has the property that knowing any single input gives no information about the target, but any pair of predictors completely determines it. This is reflected in the contributions assigned by the Shapley decomposition: the singleton predictors $\{X_1\}$, $\{X_2\}$ and $\{X_3\}$ each make no contribution to the total. Instead, the total one bit of mutual information is shared equally between the three two-input predictors, $\{X_1, X_2\}$, $\{X_1, X_3\}$ and $\{X_2, X_3\}$. The three-input predictor $\{X_1, X_2, X_3\}$ makes no contribution, because the target is already fully determined by knowing any of the pairwise predictors.

The third example, \textsc{RBOJ}, played an important role in the
literature on PID, because it was used in
\citep{rauh2014reconsidering} to prove that no partial information
decomposition is possible that obeys the so-called identity axiom,
along with the axioms of \cite{williams2010nonnegative} and
local-positivity. In particular, no such decomposition is possible for
this distribution. In this joint distribution, the inputs $X_1$, $X_2$
and $X_3$ are related by the exclusive-or function, and the target $Y$
is in a one-to-one relationship with its inputs. As a result, each
input provides one bit (in the usual sense) of information about the target, and each pair of
inputs provides two bits, which completely determine the target.
Consequently, learning the third input adds no new information about
the target, if the other two are already known. Because our
decomposition is different from PID, it is able to assign non-negative
values to each of the predictors. It shares out the total two
bits of mutual information equally between the three singleton
predictors,  $\{X_1\}$, $\{X_2\}$~and~$\{X_3\}$. This can be seen as
a compromise between the fact  that the contributions of each member of a  pair of input variables
are independent (similarly to the 2-bit copy) and that they, at the same time, need to be fairly
allocated to three variables. 

We finish with an example, \textsc{Three way And}, in which the
decomposition is  less intuitive. In this case, the target is
1 if and only if all three inputs are 1. Similarly to the \textsc{And}
example, our measure divides the information contributions between the
three singleton predictors, assigning none to the two- or three-input
predictors. The reason for this is similar to the \textsc{And} example.
Because of this, from the perspective of our measure, this example
looks similar to the \textsc{RBOJ} example.

\section{Cooperative game theory and weighted path summation}

\label{cooperative_game_theory}

In \cref{sec:lattice_decomposition} we defined the information
contribution $I_A(X_1, \dots, X_n\,;\,Y)$ based on a uniform weighting
of the maximal chains in the input lattice. In this section we return
to the question of how this uniform distribution would be  justified.

\newcommand{\player}{A}
\newcommand{\playerset}{N}

To do so, we use the notion of the \emph{Shapley value}
\citep{Shapley1953} from cooperative game theory. Informally, the idea
of the Shapley value is that one has a set of players
$\playerset = \{\player_i, i=1\dots|\playerset|\}$. Subsets of the
players are called \emph{coalitions}, and each coalition is assigned a
\emph{{total} score} (we will use this term interchangeably with
\emph{payoff}), which is to be interpreted as how well that set of
players could do at some task, without the participation of the
remaining non-coalition players. 
Given this data, the problem is to assign a score to each individual
player, such that the scores of each individual player sum up to the
total score. The players' scores should reflect their ``fair''
contribution in achieving the total score. 

For this assignment of scores to be uniquely characterized,
Shapley postulates 
that the scores assigned to players should be a linear function of the coalitions' scores, a notion of
relevance
(explained below) and a notion of symmetry amongst the players (where
players whose contribution to value cannot be distinguished via a
symmetrical exchange of players should attain the same Shapley value).
The basic Shapley value assumes that all subsets of $\playerset$ are
possible as coalitions. Since Shapley's original work, many
generalizations of the Shapley value have been developed
\citep{Bilbao1998,Bilbao2000,Grabisch2009,Faigle2012,UlrichFaigleandMichelGrabisch2013a}. \label{sec:relat-shapl-value}

The purpose of our measure $I_A(X_1,\dots,X_n\,;\,Y)$ is to assign to each predictor $A$ a unique share of the mutual information
$I(X_1,\dots,X_n\,;\,Y)$. \Cref{our_measure} calculates this as a linear function of the quantities $D_\text{KL}(p_{\simpcomp}\|p_{\{\emptyset\}})$, which can be thought of as the information provided by $\simpcomp$, which is a set of predictors. This is  closely reminiscent to the task
of the Shapley value to identify contribution of a particular player when the values of all valid coalitions of players are known.

In fact, we can apply cooperative game theory directly to our problem,
by treating sets $A$ of input variables (i.e.\ predictors) as players in a cooperative game, in which the score of a coalition $\simpcomp$ is given by $D_\text{KL}(p_{\simpcomp}\|p_{\{\emptyset\}})$, and hence the total score is $I(X_1,\dots,X_n\,;\,Y)$. The only complication is that not every set of players forms a viable coalition, because $\simpcomp$ is constrained to be a simplicial complex. We thus need a
formulation which permits us to restrict the possible
coalitions to the ones imposed by the simplicial partial order $\leq$ on
the set $\allsimpcomp$ of all simplicial complexes. This restriction also necessitates a modification of
the symmetry axiom of the original Shapley value to guarantee that the
generalized Shapley allocation becomes uniquely determined.

Concretely, here we argue that the specific quantity in
Eq.~(\ref{eq:1}) can be interpreted precisely as the generalized
Shapley value under \emph{precedence constraints} in the sense of
\cite{Faigle1992}.

\begin{table}
  \centering
  \begin{tabular}[t]{|l|l|}
    \hline
    \textbf{Shapley Theory}& \textbf{Information in Simplicial
                             Complexes}\\
    \hline
    player $A,B,C$ & set (simplex) $A,B,C$\\
    coalition & simplicial complex\\
    
    empty coalition $\emptyset$ & empty\footnotemark\ simplicial complex
                                  $\simpcomp^{(0)} = \{\}$\\
    coalition of all players $\allplayers$ & all subsets of $\{1,\dots,
                                             n\} = [n]$, i.e.\ $2^{[n]}$ \\
    value of a coalition $\coalition$ &
                                        $D_\text{KL}(p_{\simpcomp}\|p_{\{\emptyset\}})$
                                        (0 if $\simpcomp=\{\}$)\\
    Shapley value $\phi_A(v)$ & information contribution $I_A(X_1,\dots,X_n\,;\,Y)$\\
    \hline
  \end{tabular}
  \caption{Correspondence between our quantities and coalitional game
    theory. Note that the empty coalition/simplicial complex is
    included at the bottom of the lattice.}
  \label{tab:shapley-vs-information}
\end{table}

We will use similar notation for  cooperative games to the notation we have used this far for information quantities. 
We use the symbols $A,B,C,\dots$ for players, and similarly
$\simpcomp,\simpcomp',\dots$ for coalitions, $\allsimpcomp$ for the set
of all feasible coalitions, to keep a coherent notation. Finally,
let $\allplayers$ denote the set of all players. 
\Cref{tab:shapley-vs-information} gives the relationship between
game-theoretic quantities and the quantities defined in previous
sections. To simplify the exposition and render it  coherent
with respect to existing literature on cooperative game theory, we additionally include the
empty coalition (simplicial complex) below the coalition
$\{\emptyset\}$ and assign to it the value 0. This will not affect any
of the results on the input lattice. 

\footnotetext{In previous chapters, we considered $\{\emptyset\}$ as
  bottom node of the input lattice. Here, we also include the node
  below $\{\emptyset\}$, interpreted as coalition, namely, the
  \emph{empty coalition} in the lattice.}

In what follows, we introduce Faigle and Kern's extension of the Shapley value, and then show that that applying it to this `information game' is indeed equivalent to \cref{eq:1} with $\mu$ taken as the uniform distribution over maximal chains, resulting in \cref{our_measure}. This demonstrates that our measure obeys the axioms of Linearity, Carrier and Hierarchical strength, described below, which are used to derive Faigle and Kern's result.

\subsection{Shapley Value under Precedence Constraints}

We now proceed to define the (generalized) Shapley value under
precedence constraints as defined in \citep{Faigle1992}. For brevity,
when we henceforth say ``Shapley value'', we will refer to this variant
unless stated otherwise.

Let $\allplayers$ be a finite partially ordered set of players, where
for $A,B \in \allplayers$ the relation $B \leq A$ enforces that, if
$A \in \coalition$ for any coalition
$\coalition \subseteq \allplayers$, one also has $B \in \coalition$
(compare with \eqref{simplcomp}). Under this constraint, not
necessarily every subset of $\allplayers$ is a valid coalition. Let $\allcoalitions$ be the set of all
valid coalitions. 
$\allcoalitions$ 
is closed under intersection and
union operations, but not necessarily under the complement operation.

A
\emph{cooperative game} on $\allplayers$ is now a function
\begin{equation}
  \label{eq:game-value}
  v: \allcoalitions \to \mathbb{R}
\end{equation}
such that $v(\emptyset) = 0$. Consider the vector space
$\cooperativegame$ of all cooperative games on $\allplayers$, then the
Shapley value is defined as a function
\begin{equation}
  \label{eq:shapley-value-type}
  \Phi:\cooperativegame \to \mathbb{R}^\allplayers
\end{equation}
which defines, for each player $A$ from $\allplayers$, their share
$\Phi_A(v)$ for the game $v$.

Several axioms are postulated for the Shapley value.

\begin{axiom}[Linearity]
  For all $c\in\mathbb{R}, v,w \in \cooperativegame$, demand
  \begin{align*}
    \label{eq:axiom-linearity}
    \Phi(c\,v) & = c\, \Phi(v)\\
    \Phi(v+w) & = \Phi(v) + \Phi(w)
  \end{align*}
\end{axiom}

\begin{axiom}[Carrier]
  Call a coalition $\someothercoalition\in\allcoalitions$ a
  \emph{carrier} of $v\in\cooperativegame$ if
  $v(\coalition) = v(\coalition \cap \someothercoalition)$ for all
  $\coalition \in \allcoalitions$. Then, if $\someothercoalition$ is a
  carrier of $v$, we have
  \begin{equation}
    \label{eq:carrier}
    \sum_{A\in\someothercoalition} \Phi_A(v) = v(\someothercoalition)\;.
  \end{equation}
\end{axiom}
The carrier axiom needs a brief explanation. It unifies two intuitive
axioms that are sometimes used instead, the \emph{dummy axiom} (a
player that does not affect the value
(or payoff)
of any coalition attains a Shapley value of 0) and the
\emph{efficiency axiom} (the sum of the Shapley values of all players
sums up to the total payoff of the whole set of players).

The third axiom of the traditional Shapley value postulates that
players whose contribution to coalition payoffs are equivalent with
respect to a symmetric permutation will also receive the same Shapley
allocation. This axiom cannot be used in our case, because it requires
all subsets of $\allplayers$ to be feasible coalitions. To obtain a
unique characterization of the generalized Shapley value discussed
here, a stronger requirement needs to be imposed. There are several
axiom sets which are equivalent on the ordered coalition games
discussed here (see introduction of \cref{cooperative_game_theory} above).
We follow \citep{Faigle1992} in choosing the formulation via
hierarchical strength.

We need a number of definitions. Call an injective map
\begin{equation*}
  \pi:\allplayers\to\{1,2,\dots,|\allplayers|\}
\end{equation*}
a \emph{(feasible) ranking} of the players in $\allplayers$ if for all
$A,B \in\allplayers$ we have that $A < B$ (i.e.\ $A \leq B$ and
$A\not= B$) implies $\pi(A)<\pi(B)$.

The ranking $\pi$ of $\allplayers$ induces a ranking
$\pi_\coalition: \coalition\to\{1,2,\dots,|\coalition|\}$ on all
coalitions $\coalition\in\allcoalitions$ via
$\pi_\coalition(A) < \pi_\coalition(B)$ if and only if $\pi(A)<\pi(B)$
for all $A,B \in \coalition$. Note that only ordering, but not
numbering are equivalent in $\pi$ and $\pi_\coalition$.

We say that player $C\in\coalition$ is $\coalition$-maximal in the
ranking $\pi$ if $\pi_\coalition(C) = |\coalition|$ which is the same
as saying that $\pi(C) = \max_{A\in\coalition} \pi(A)$ or that there
is no player $A$ in coalition $\coalition$ with $C < A$.

We are now ready to express the concept of hierarchical strength: the
\emph{hierarchical strength} $h_\coalition(C)$ of the player $C$ in
$\coalition$ is defined as the proportion of (total) rankings
$\pi$ in which $C$ is $\coalition$-maximal. Formally,
\begin{equation}
  \label{eq:hierarchical-strength}
  h_\coalition(C) := \frac{1}{|\allrankings(\allplayers)|}
  |\{\pi\in\allrankings(\allplayers)\mid \text{$C$ is
    $\coalition$-maximal for $\pi$}\}|
\end{equation}
where $\allrankings(\allplayers)$ is the set of all rankings for the
set $\allplayers$ of players.

Define now a particular fundamental game type, the
\emph{inclusion game} over $\coalition$, called $\zeta_\coalition$
via:
\begin{equation}
  \label{eq:inclusion-game}
  \zeta_\coalition(\yetanothercoalition) :=
  \begin{cases}
    1 & \text{if $\coalition\subseteq \yetanothercoalition$}\\
    0 & \text{otherwise}\;.
  \end{cases}
\end{equation}
In other words, the payoff of the game is 1 if the tested coalition
$\yetanothercoalition$ encompasses a given reference coalition
$\coalition$ and vanishes otherwise. We mention without proof that
these games form a basis of $\cooperativegame$ and thus it is
sufficient to define the Shapley value over all inclusion games on
$\allplayers$. Inclusion games form a preferred set of coalitions in
the study of coalitional games since they are closely related to
elementary games (games where only a specific coalition achieves a
non-zero payoff) and have an intuitive interpretation.

Finally, we can now define 
\begin{axiom}[Hierarchical Strength (Equivalence)]
  For any $\coalition\in\allcoalitions$, $A,B\in\coalition$, we
  demand:
  \begin{equation}
    \label{eq:hierarchical-strength-axiom}
    h_\coalition(A)\Phi_B(\zeta_\coalition) = h_\coalition(B) \Phi_A(\zeta_\coalition)
  \end{equation}
\end{axiom}
Informally, the Shapley value of a player $B$ in a coalition
$\coalition$ for the inclusion game is weighted against that of
another player $A$ in the same coalition via their hierarchical
strength. All else being equal, the Shapley values of the two players
relate to each other as their hierarchical strengths --- a larger
value of the hierarchical strength corresponds to a larger Shapley
value, i.e.\ larger allocation of payoff.

\citet{Faigle1992} note that the hierarchical strength 
emphasizes the given player being \emph{on top} of its respective coalition
in the given ranking rather than, say, considering its average rank. This
is insofar an intuitive choice for the generalized Shapley
value, since it is only the top-ranked player in a coalition which
determines whether that particular coalition is formed at all. In
other words, it is a measuring in how many rankings (relative to the
total number of rankings) that particular player has the power to
decide whether the given coalition will be formed or not.

It turns out this has a straightforward reinterpretation and
generalization in the context of Markovian coalition processes
\citep{UlrichFaigleandMichelGrabisch2012}
In addition, there are many other formulations
equivalent with it (see references mentioned in the introduction to
the present chapter
\cref{cooperative_game_theory}). We opted for the formulation via
hierarchical strength since it is the most widely established
generalization of the symmetry axiom for the classical Shapley value
in the literature.

We state here without proof that the unique payoff allocation is given
by

\begin{equation}
  \label{eq:generalized-shapley-value}
  \Phi_C(v) = \frac{1}{|\allrankings(\allplayers)|}
  \sum_{\substack{\yetanothercoalition\in\allcoalitions\\
    C\in\yetanothercoalition^{\mathmakebox[0cm][l]{+}}}}
|\allrankings(\yetanothercoalition\setminus \{C\})| \cdot
|\allrankings(\allplayers\setminus\yetanothercoalition)| 
\;
\bigl(
v(\yetanothercoalition) - v(\yetanothercoalition\setminus \{C\})
\bigr)
\end{equation}

where we trivially assume $|\allrankings(\emptyset)| = 1$ and where
$C \in \yetanothercoalition^+$ sums over all coalitions
$\yetanothercoalition$ for which $C$ is maximal. The (generalized)
Shapley value of $C$ is given by the marginal contribution of $C$ to
all coalitions $\yetanothercoalition$ for which it is maximal, weighted
by the proportion of rankings for which this is the case.

We now show that our definition of information contribution of a
simplex (\cref{our_measure}) is equivalent to the generalized Shapley value
under precedence constraints if the value is the mutual information
between input variables $X_1,\dots,X_n$ and output variable $Y$. Thus,
the information contribution has a natural interpretation in the context of game-theoretic
payoff allocation.

\subsection{Equivalence of Generalized Shapley Value and the Sum over Maximal Chains}

\label{sec:equiv-gener-shapl}

\begin{theorem}
\label{theorem:equivalent-path-counting-shapley}

We now prove that, under the identifications of \cref{tab:shapley-vs-information},
the information contribution of a
set $A$ is identical with its Shapley value under precedence
constraints, with $A$ interpreted as a player. More precisely:
\begin{multline}
  \label{eq:equivalence-info-contribution-shapley}
  \sum_{(\simpcomp,\simpcomp')\in \mathcal{E}_A}
  \mu(\simpcomp,\simpcomp')\,  D_\text{KL}(p_{\simpcomp'}\,\|\,p_\simpcomp)  =\\
  \sum_{\substack{\coalition\in\allcoalitions\\ A \in \coalition^{\mathmakebox[0cm][l]{+}}}}
  \frac{|\allrankings(\coalition \setminus \{A\})| \cdot |\allrankings(2^{[n]}\setminus\coalition)|}{|\allrankings(2^{[n]})|}
  \left[
  D_\text{KL}(p_\simpcomp||p_{\{\emptyset\}}) - D_\text{KL}(p_{\simpcomp\setminus \{A\}}||p_{\{\emptyset\}})
  \right]\,,
\end{multline}
where the weighting of the lattice chains $\mu$ is chosen as the
uniform distribution, $\mu(\gamma) = 1/|\Gamma|$. To permit consistency
between lattice and Shapley model, we furthermore define the bracketed term on
the right side to be 0 for $\simpcomp=\emptyset$.

\end{theorem}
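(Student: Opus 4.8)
The plan is to prove \eqref{eq:equivalence-info-contribution-shapley} by showing that its two sides are sums of identical terms carrying identical weights, via a bijection between their index sets followed by a term-by-term comparison. First I would match the index sets. An edge $(\simpcomp,\simpcomp')\in\mathcal{E}_A$ satisfies $\simpcomp'=\simpcomp\uplus\{A\}$; since $\simpcomp'$ is a simplicial complex obtained by adjoining $A$, it contains no proper superset of $A$, so $A$ is maximal in $\simpcomp'$ and $\simpcomp'$ is precisely a coalition $\coalition$ with $A\in\coalition^{+}$. Conversely, for any such $\coalition$, maximality of $A$ ensures that $\coalition\setminus\{A\}$ is still downward closed, hence a valid coalition equal to $\simpcomp$. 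This gives a bijection $(\simpcomp,\simpcomp')\leftrightarrow\coalition=\simpcomp'$, under which the divergence factors agree: applying the Pythagorean identity to $\{\emptyset\}\le\simpcomp\le\simpcomp'$ (\cref{full_pythagorean_eqn}) gives $D_\text{KL}(p_{\simpcomp'}\|p_\simpcomp)=D_\text{KL}(p_{\simpcomp'}\|p_{\{\emptyset\}})-D_\text{KL}(p_{\simpcomp}\|p_{\{\emptyset\}})$, which is exactly the bracketed difference $v(\coalition)-v(\coalition\setminus\{A\})$ with $v(\mathfrak{T})=D_\text{KL}(p_{\mathfrak{T}}\|p_{\{\emptyset\}})$.

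With the terms matched, the theorem reduces to the single combinatorial identity
\[
  \frac{n_{(\simpcomp,\simpcomp')}}{|\Gamma|}
  =\frac{|\allrankings(\coalition\setminus\{A\})|\cdot|\allrankings(2^{[n]}\setminus\coalition)|}{|\allrankings(2^{[n]})|},
\]
relating the uniform chain weight on the left to the Faigle--Kern coefficient on the right. To prove it I would identify maximal chains of the input lattice with linear extensions of the Boolean lattice $(2^{[n]},\subseteq)$: each covering step adjoins one subset all of whose proper subsets are already present, so a maximal chain is an ordering of the subsets of $V$ compatible with inclusion (with $\emptyset$ forced first); hence $|\Gamma|=|\allrankings(2^{[n]})|$. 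Next, the number $n_{(\simpcomp,\simpcomp')}$ of chains through a fixed edge factorizes as the product of the number of chains from $\bot_\text{in}$ up to $\simpcomp$ and the number of chains from $\simpcomp'$ up to the top. The first factor counts the orders in which the complex $\simpcomp=\coalition\setminus\{A\}$ is assembled, i.e.\ the linear extensions of $(\coalition\setminus\{A\},\subseteq)$, giving $|\allrankings(\coalition\setminus\{A\})|$. The second counts the orders in which the remaining subsets $2^{[n]}\setminus\coalition$ are adjoined; because $\coalition$ already contains every subset lying inside it, adjoining a set $S\notin\coalition$ only requires those proper subsets of $S$ outside $\coalition$ to precede it, so these orders are the linear extensions of the complementary up-set $(2^{[n]}\setminus\coalition,\subseteq)$, giving $|\allrankings(2^{[n]}\setminus\coalition)|$. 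Substituting the three counts yields the weight identity and hence the claim.

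I expect the crux to be the justification of the two factor counts, especially the upper one: one must verify that an ordering of $2^{[n]}\setminus\coalition$ extends the starting complex $\simpcomp'$ to the top through valid simplicial complexes exactly when it is a linear extension of the complementary up-set, using that $\coalition$ already supplies every subset inside it. The remaining ingredients --- the chain/linear-extension dictionary and the factorization of the edge count into its lower and upper parts --- are standard, and the only boundary case, in which $\coalition\setminus\{A\}$ reaches the empty coalition, is handled by the stated convention that the bracketed term vanishes there, consistent with $v(\emptyset)=0$.
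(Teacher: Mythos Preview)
Your proposal is correct and follows essentially the same approach as the paper: both establish the bijection between edges in $\mathcal{E}_A$ and coalitions with $A$ maximal, invoke the Pythagorean relation to match the divergence terms, and prove the weight identity by identifying maximal chains with feasible rankings (linear extensions) and factorizing the edge count into lower and upper pieces. The paper spells out the chain--ranking bijection more explicitly (separate well-definedness, injectivity, and surjectivity arguments), whereas you invoke the linear-extension dictionary more directly, but the underlying argument is the same.
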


\begin{proof}
  Consider $\allsubsets = 2^{[n]}$. Identify the elements
  $A \in \allsubsets$, i.e.\ the subsets of $[n]$, with the players in a
  Shapley coalition game with partial ordering defined via the subset
  relation, i.e.\ via
  \begin{align*}
    B \leq A :\Longleftrightarrow B\subseteq A\,.
  \end{align*}
  Per definition, the partial order-compatible coalitions $\coalition$
  then precisely constitute the simplicial complexes of $\allsubsets$.

  We now show that, under these identifications, the (feasible)
  rankings of players define precisely the maximal chains over
  simplicial complexes. In other words, there is a one-to-one
  correspondence between the rankings of the ordered coalition game
  and the maximal chains over its corresponding simplicial complexes.
  
  The intuition of the proof is as follows: in the Hasse diagram for
  the input lattice \cref{sec:lattice_decomposition}, each maximal
  chain is formed by successively adding each of the predictors, one
  at a time, in such a way that each step of the chain remains a
  simplicial complex. We will demonstrate that the orders in which the predictors are added in
  a maximal chain correspond precisely to the feasible rankings of the
  predictors interpreted as Shapley
  players.   We now proceed to show this formally.

  We first show well-definedness, i.e.\  that each ranking defines a maximal chain. Let $\pi$, a
  (feasible) ranking over the set of players $\allplayers$, be given
  (we remind that each player is a subset of $[n]$). Define the sequence

  \begin{align}
    \label{eq:sequence-of-coalitions}
    \coalition_0 &:= \emptyset, \coalition_1, \coalition_2, \dots,\coalition_{|\allsubsets|}\\
    \intertext{where for $k=1,\dots,|\allsubsets|$}
    \coalition_k &:= \{A \in \allsubsets\mid \pi(A) \leq k\}\\
    & = \pi^{-1}(\{1,\dots,k\})\,. \label{eq:2}
  \end{align}

  We now need to show now that this sequence
  $(\simpcomp_k)_{k=0,\dots,|\allsubsets|}$ is, first, a chain of
  simplicial complexes (equivalently, feasible coalitions) and,
  second, maximal.

  If $k=0$, then $\simpcomp_k = \emptyset$ is trivially a simplicial
  complex. Else, let $1\leq k \leq |\allsubsets|$. Consider now
  $A\in \coalition_k$, and any $B\in\allsubsets$ with $B\subseteq A$.
  We have $\pi(B) \leq \pi(A) \in \{1,\dots,k\}$ per ranking property,
  and thus $\pi(B)\in\{1,\dots,k\}$, and thus $B\in\coalition_k$ and
  $\coalition_k$ is a simplicial complex.

  From \eqref{eq:2} it follows that, for $k\leq l$,
  $\coalition_k \subseteq \coalition_l$. Therefore, if
  $A\in\simpcomp_k$, also $A\in\simpcomp_l$ and thus $\simpcomp_k \leq
  \simpcomp_l$ and the
  $(\simpcomp_k)_k$ form a chain.

  This chain is maximal. To show this, consider successive simplicial
  complexes $\simpcomp_k,\simpcomp_{k+1}$, $k=0,\dots,|\allplayers|-1$
  in the sequence. Consider $\widetilde{\simpcomp}$ such that
  $\simpcomp_k \leq \widetilde{\simpcomp} \leq \simpcomp_{k+1}$
  according to the natural partial order $\leq$ on simplicial
  complexes. If $\simpcomp_k \not= \widetilde{\simpcomp}$, then there
  exists a $B\in \widetilde{\simpcomp} \setminus \simpcomp_k$ and,
  since $\widetilde{\simpcomp} \leq \simpcomp_{k+1}$, one has
  $B\subseteq C$ for some $C\in \simpcomp_{k+1}$. This means that
  $\pi(B) \leq \pi(C)$. Since $B \notin \simpcomp_k$, also
  $\pi(B) \notin \{1,\dots,k\}$, so, per construction of
  $\simpcomp_{k+1}$, necessarily $\pi(B)=k+1$ and
  $B= \pi^{-1}(k+1) = C \in \simpcomp_{k+1}$. It follows that
  $\widetilde{\simpcomp}$ must be either $\simpcomp_k$ or
  $\simpcomp_{k+1}$, thus, $\simpcomp_k \prec \simpcomp_{k+1}$ and the
  chain is maximal. This shows that the mapping from rankings to maximal chains is
  well-defined.
  
  We show now that mapping rankings to maximal chains
  \eqref{eq:sequence-of-coalitions} via \eqref{eq:2} is injective.
  For this, consider two rankings $\pi\not=\rho$. We have to show that
  they induce different maximal chains.

  Consider $B$ with $\pi(B)\not=\rho(B)$. Assume, without loss of
  generality, $\pi(B) < \rho(B)$. If we consider the chain
  $(\simpcomp^\pi_k)_k$ induced by $\pi$ (and analogously
  $(\simpcomp^\rho_k)_k$ for $\rho$), then observe that the chain can
  be written in the form of inclusion chain as
  \begin{align}
    \label{eq:pi-induced-sequence}
    \emptyset \subseteq \simpcomp_0^\pi \subseteq \simpcomp_1^\pi \subseteq \dots \subseteq
    \underset{\uparrow\mathmakebox[0cm][l]{\raisebox{-1ex}{\scriptsize\text{first time where $B$
    appears in $(\simpcomp^\pi_k)_k$}}}}{\simpcomp^\pi_{\pi(B)}} \subseteq \dots \subseteq
    \simpcomp^\pi_{|\allplayers|} = \allplayers\,.
  \end{align}
  In this chain, the first simplicial complex to contain $B$ is the
  one with index $\pi(B)$.  Under the same consideration for the chain
  induced by $\rho$, the first member of the chain to contain $B$ is the one with
  index $\rho(B)$. However, $\pi(B)<\rho(B)$ and therefore the chains
  must differ and assigning chains to rankings via \eqref{eq:sequence-of-coalitions} is injective.

  Show now surjectivity: for each maximal chain, there is a ranking
  that produces it.
  Let
  \begin{align}
    \label{eq:given-chain}
    \emptyset = \simpcomp_0 \subseteq \simpcomp_1 \subseteq \dots
    \subseteq \simpcomp_{|\allplayers|} = \allplayers
  \end{align}
  be a maximal chain. We show now that each step adds exactly one
  $C\in\allplayers$. Assume none of the steps in the sequence is
  trivial, i.e.\ we always have $\simpcomp_j \subsetneq \simpcomp_{j+1}$.
  All $\simpcomp_k$ are at the same time simplicial complexes as well
  as --- equivalently --- order-compatible coalitions. Choose
  $C\in\simpcomp_{j+1}\setminus\simpcomp_j$ minimal (i.e.\ such that
  for any $B\in\simpcomp_{j+1}\setminus\simpcomp_j$ with
  $B\subseteq C$, we have $B=C$).

  Since $C\in\simpcomp_{j+1}$, for any $B\subseteq C$, we have
  $B\in \simpcomp_{j+1}$. It follows that either $B\in\simpcomp_j$ or
  $B\in\simpcomp_{j+1}\setminus \simpcomp_j$; in the latter case, however, because of
  minimality of $C$ in $\simpcomp_{j+1}\setminus\simpcomp_j$, it
  follows $B = C$. Thus $\simpcomp_j \cup \{C\}$ is a simplicial
  complex, and because of maximality of the chain, it must be
  identical to $\simpcomp_{j+1}$. In summary, in each step of the
  maximal chain precisely one simplicial complex is added.

  Finally, given a maximal chain
  \begin{align}
    \label{eq:3}
    \emptyset \prec \simpcomp_1 \prec \simpcomp_2 \prec \dots \prec
    \simpcomp_{|\allplayers|} = \allplayers\,,
  \end{align}
  define for every $j=1,\dots,|\allplayers|$ the inverse ranking
  $\pi^{-1}(j)$ to map onto the unique set (player) in
  $\simpcomp_j\setminus\simpcomp_{j-1}$. The maximal chain
  \eqref{eq:3} is induced by the ranking $\pi$; we have thus shown the
  mapping \eqref{eq:2} of rankings to maximal chains to be surjective
  (for every maximal chain there is a ranking that is mapped to it).
   With the injectivity shown earlier, this mapping is thus bijective.
  In short, we have shown that to each maximal chain corresponds one
  and only one feasible ranking.

  Consider now ${(\simpcomp,\simpcomp')\in \mathcal{E}_A}$, i.e.\ an
  edge where $\simpcomp' = \simpcomp\cup\{A\}$ is obtained by adding
  $A$ to $\simpcomp$. For the set $\Gamma(\simpcomp,\simpcomp')$ of
  (maximal) chains $(\simpcomp_k)_k$ who pass through this edge, i.e.\
  for which $\simpcomp_j = \simpcomp$ and
  $\simpcomp_{j+1} = \simpcomp'$ for some $j$, one has, in analogy to the
  derivation above, a one-to-one map to the pairs of the rankings over
  $\simpcomp'\setminus\{A\}$ and those over
  $\allplayers\setminus\simpcomp'$:
  \begin{align}
    \allrankings(\simpcomp'\setminus\{A\})
    \times \allrankings(\allplayers\setminus\simpcomp')\,.
  \end{align}
  This is seen by replacing the full ranking with two subrankings, one over the
  lower sublattice with $\simpcomp$ as top element instead of
  $\allplayers$ and one over the upper one
  which has
  $\simpcomp'$ as bottom element replacing $\simpcomp_0$.
  It follows that we have
  \begin{align}
    \label{eq:chains-with-edge-vs-ranking-counts}
    |\Gamma(\simpcomp,\simpcomp')| =
    |\allrankings(\simpcomp'\setminus\{A\})| \cdot
    |\allrankings(\allsubsets\setminus\simpcomp')| \,.
  \end{align}  

  Consider a particular edge
  ${(\simpcomp,\simpcomp')\in \mathcal{E}_A}$. We note that this edge
  corresponds precisely to the simplicial complexes
  $\simpcomp'\in\allsimpcomp$ where $A$ is maximal in
  $\simpcomp'$, i.e.\ $A\in\coalition'^+$. We had earlier the short-hand
  notation
  $\mu(\mathfrak{S}, \mathfrak{S}') = \sum_{\gamma \in
    \Gamma(\mathfrak{S}, \mathfrak{S}')} \mu(\gamma)$ where
  $\Gamma(\simpcomp,\simpcomp')$ ranges over all chains containing a
  particular edge. If all chains/paths $\gamma$
  are equally weighted, their weight is given by
  \begin{align}
    \label{eq:path-weight}
    \frac{1}{|\Gamma|} =
    \frac{1}{|\allrankings(\allplayers)|} = \frac{1}{|\allrankings(2^{[n]})|}
  \end{align}

  Finally, note that 
  \begin{align}
    \label{eq:successive-kl-steps}
    D_\text{KL}(p_{\simpcomp'}\,\|\,p_\simpcomp) = D_\text{KL}(p_{\simpcomp'}||p_{\simpcomp^{(0)}}) - D_\text{KL}(p_{\simpcomp'\setminus \{A\}}||p_{\simpcomp^{(0)}})
  \end{align}
  because of the Pythagorean relation \eqref{full_pythagorean_eqn}.
  This completes the proof of \eqref{eq:equivalence-info-contribution-shapley}.

\end{proof}

Note that, when constructing the correspondence between the input
lattice to the Shapley value, for the former we had the maximal chains
start at $\{\emptyset\}$ rather than at $\emptyset$ as bottom of the
lattice. However, the property from
\cref{theorem:equivalent-path-counting-shapley} continues to hold in
this case, since the bottom step from $\emptyset$ to $\{\emptyset\}$
is unique and does not affect the path counts.

\section{Discussion}

In the search for a partial information measure that allocates
informational contributions to various input variable sets (i.e.\
predictors) we relinquished the demand to quantify redundancy and
instead applied the Pythagorean decomposition to characterize the
additional contribution of an input variable set as it is added on the
relevant maximal chains. This ``longitudinal'' contribution is
chain-dependent, though. To be able to talk about a contribution of an
individual predictor, though, we need to express this contribution
independently of the particular chain.

Intuitively, this can be done by assigning a probability distribution
over the chains and averaging a predictor's contribution over all
these chains; most naturally, the equidistribution could be chosen for
this purpose. A more justified reasoning for this choice can be
derived by observing that the setup of information contribution
precisely matches the situation of a coalition game where the value of
a coalition is the contribution of that coalition to the overall
``value'', i.e.\ information about the target variable; and that
contribution can be fairly assigned via the Shapley value concept. Of
course, with the natural precedence order of predictors, not all
coalitions of predictors (i.e.\ players in the language of game
theory) are viable. We needed to resort to the variant of the Shapley
value under \emph{precedence constraints} which, as it turns out,
corresponds precisely to the averaging over all maximal chains of the
input lattice, strengthening both the confidence in the
appropriateness of the measure and the intuition behind it.

While the view of a predictor contribution stemming from averaging
over chains (paths) through the lattice seems abstract and artificial,
the Shapley value-based interpretation justifies its use. In fact,
this perspective finds, again, additional justification from more
recent coalition game theory in which coalitions are not considered as
immutable, but can change as per a stochastic process via local
incentives \citep{UlrichFaigleandMichelGrabisch2012}. In our context,
this would correspond to a dynamically chosen path in an input
lattice. At this stage, however, we are interested in the static
contributions of the predictors; whether there will be an incentive to
invoke a complex trajectory in the input lattice over which the
contributions will be averaged, remains a question for the future.

\section*{Acknowledgement}  
DP would like to acknowledge support by H2020-641321 socSMCs FET Proactive project. NA and NV acknowledge the
support of the Deutsche Forschungsgemeinschaft Priority Programme ``The Active Self'' (SPP 2134).

\bibliography{Inf_Decomp} 

\end{document}